\documentclass[11pt]{article}

\usepackage[utf8]{inputenc}
\usepackage[T1]{fontenc}
\usepackage{lmodern}
\usepackage{microtype}
\usepackage{geometry}
\usepackage{amsmath,amssymb,amsthm,mathtools}
\usepackage{booktabs}
\usepackage{array}
\usepackage{enumitem}
\usepackage{hyperref}
\usepackage{url}
\usepackage{graphicx}
\usepackage{float}
\usepackage{caption}
\graphicspath{{./}}

\hypersetup{hidelinks}

\newtheorem{definition}{Definition}[section]
\newtheorem{assumption}[definition]{Assumption}
\newtheorem{lemma}[definition]{Lemma}
\newtheorem{theorem}[definition]{Theorem}

\newcommand{\Prb}{\Pr}
\newcommand{\Hinf}{H_{\infty}}
\newcommand{\eps}{\varepsilon}
\newcommand{\bits}{\{0,1\}}
\newcommand{\concat}{\mathbin\Vert}
\newcommand{\pos}[1]{\left[#1\right]_{+}}

\title{\textbf{Contextual Chain: Lightweight Continuity Authentication for Intermittently Connected Devices}\\
\large Resource-Based Security from Shared Experience}
\author{Song-Ju Kim}
\date{October 2026}

\begin{document}
\maketitle

\begin{abstract}
  Can authentication make memory, rather than computational hardness, the attacker's bottleneck? Contextual Chain is a lightweight continuity protocol for intermittently connected devices that share evolving physical or operational context. An honest device follows one realized history, updating a compact accumulator and fixed hash-based readiness lanes; outages cause pause or bounded rollback, not branch search. After the epoch is frozen, a fresh challenge selects one lane under a short deadline. An outsider that missed context may therefore need to prepare for many mature histories before learning which one will be tested. In the standard random-oracle model, a causal counting theorem lower-bounds the deadline-accessible retained state required for a target success probability against arbitrary nonlinear preselection encoding and adaptive post-selection queries, accounting for sequential depth, candidate queries, and cross-target protected information obtained online. Honest readiness memory remains fixed and independent of the number of plausible histories. Contextual Chain thus converts shared-experience uncertainty into a tunable preparation requirement without transferring combinatorial complexity to lightweight devices.
\end{abstract}

\section{Introduction}
Can a device that repeatedly loses its link still prove that it has remained on the same shared trajectory?  This question arises naturally in intermittently connected IoT and edge systems, where association may persist across outages even though one party can miss part of the physical or operational context.  Re-establishing identity with a long-term key does not by itself show continuity of that shared experience.  Contextual Chain is designed for this second problem.

The honest protocol is intentionally small.  A synchronized device maintains one contextual accumulator and $J$ fixed-width readiness lanes.  Each accepted shared event advances that single realized state; failed reconciliation or link loss causes a pause, abort, or bounded rollback rather than a search over alternatives.  At the end of an epoch the state is frozen, the verifier chooses one lane only then, and a fresh nonce requests a response under a short deadline.  The continuity step is therefore hash-based, uses fixed memory, and requires no candidate enumeration or public-key operation.  With the illustrative choice $J=256$ and $\ell=256$ bits, the readiness vector occupies 8~KiB.

The security asymmetry appears only on the uncertain side.  An outsider that missed part of the accepted context may face a posterior over many plausible histories.  Recent uncertainty is not credited if it can be regenerated after the challenge.  Contextual Chain therefore uses only a \emph{mature} prefix lying beyond the calibrated post-selection regeneration horizon.  Streaming readiness turns every mature candidate history into a protected readiness chain, while fresh lane selection occurs only after preparation has stopped.  To answer in time, the attacker must either carry useful state across that selection cut or obtain enough protected information after selection within the calibrated deadline resources.

Figure~\ref{fig:protocol-overview} separates the device implementation from the adversarial preparation problem.  The upper row is the complete honest path: accepted events update one state, interruptions never create honest branches, and reconnection leads to one frozen state and one randomly selected lane.  The lower fan-out is not memory held by the legitimate device; it is the attacker's posterior over mature histories when shared experience is incomplete.  The vertical selection cut is the key timing boundary: information on its left must have been prepared before the challenge, whereas information on its right must be regenerated within $\Delta$.  This single-path honest workload versus multi-target adversarial preparation is the core mechanism of the protocol.

\begin{figure}[H]
\centering
\includegraphics[width=0.97\linewidth]{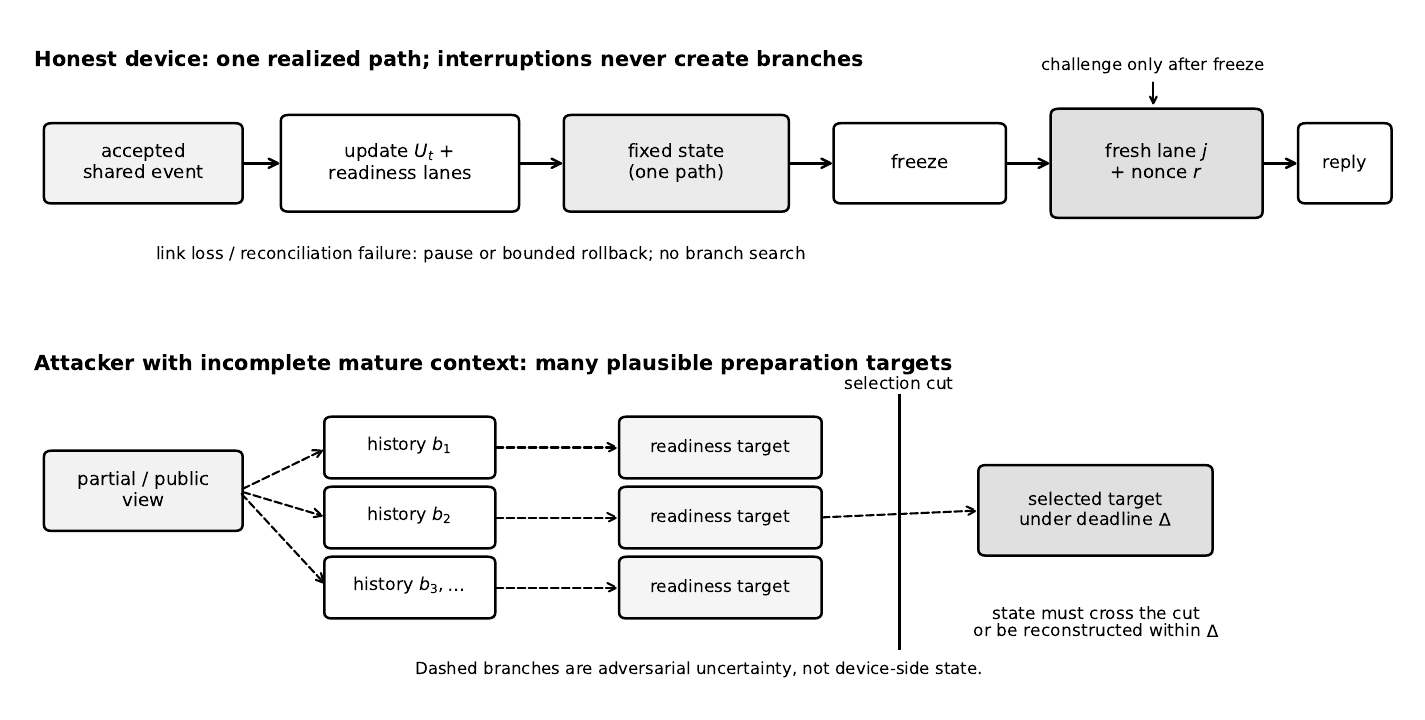}
\caption{Operational view of Contextual Chain.  The honest device advances one realized trajectory and keeps a fixed readiness vector; link interruption or reconciliation failure causes pause or bounded rollback, not candidate branching.  An attacker with incomplete mature context may instead face many plausible histories and readiness targets.  The challenge is sampled only after the state is frozen.  Solid lines denote the realized protocol path, dashed branches denote adversarial uncertainty, and the vertical cut separates pre-challenge preparation from work that must fit within the deadline $\Delta$.}
\label{fig:protocol-overview}
\end{figure}

The security claim is resource-bounded but strategy-independent.  The adversary may retain any nonlinear encoding before selection and may make adaptive post-selection random-oracle queries, including queries associated with other plausible targets.  For a desired success probability, the theorem lower-bounds deadline-accessible retained state as a function of exact mature posterior coverage, post-selection dependent depth, candidate-query capacity, and protected cross-target information acquired online.  Thus the proof does not enumerate known attacks: any strategy inside these explicit resource bounds is covered by the same counting argument.

The idea belongs to the broader family of resource-bounded cryptography, including bounded-storage, proof-of-space, memory-hard, and preprocessing/advice models~\cite{Maurer1992,DamgardEtAl2008,DodisQuachWichs2022,DziembowskiEtAl2015,AlwenSerbinenko2015,AlwenBlockiPietrzak2018,CorettiDodisGuoSteinberger2018,GhoshalTessaro2023,DryjaLiuPark2018}.  Contextual Chain differs in how the resource pressure is created: shared-experience uncertainty generates an epoch-specific family of mature readiness targets, and a fresh post-selection challenge decides which target matters only after preparation.  The transformation is
\[
\boxed{\begin{gathered}
\text{shared contextual uncertainty}\rightarrow\text{maturation}\rightarrow\text{streaming readiness}\\
\rightarrow\text{post-selection test}\rightarrow\text{retained-state requirement}.
\end{gathered}}
\]
The first Contextual Chain study introduced the operational principle of using evolving context to guide continuation and evaluated a recovery-oriented realization for intermittently connected networks~\cite{KimContextualChainSystems2026}.  The Version~1 specification documented that initial operational protocol and treated resource-sensitive proof of context as a non-normative experimental extension~\cite{KimCCSpecV1}.  Version~2 generalized Contextual Chain into a design family and deliberately left cryptographic authentication, adversary models, and formal resource lower bounds to realization-specific profiles~\cite{KimCCSpecV2}.  The construction developed here provides such a realization: it instantiates continuity authentication and gives a quantitative causal retained-state theorem.

The resulting separation is the main contribution: the honest endpoint runs a short fixed-memory state machine, while the proof quantifies how contextual uncertainty can force a much larger preparation state on an uncertain attacker.  Section~2 gives the protocol, Section~3 defines the resource-bounded adversary, Section~4 turns accepted shared experience into mature posterior coverage, and Section~5 proves the retained-state bound.  Sections~6--8 cover honest cost, calibration, and positioning; the appendices collect proof details, deployment extensions, calibration guidance, and reproducible numerical examples.

\section{Protocol Architecture}

\subsection{Protocol at a glance}
The honest device maintains only the confirmed state $S_k$, one contextual accumulator $U_t$, the $J$ readiness lanes $(X_{t,1},\ldots,X_{t,J})$, and bounded reconciliation/transaction metadata.  One epoch consists of four operations:
\begin{enumerate}[leftmargin=2em,itemsep=0.25em]
\item reconcile the next shared observation to one canonical accepted event $E_t$;
\item update $U_t$ once and advance every readiness lane by $L$ dependent hash steps;
\item after $T$ accepted events, freeze the state and sample the challenge lane only then;
\item answer from that lane under the deadline and commit the next confirmed state only after successful mutual authentication.
\end{enumerate}
No honest-side branch search occurs.  The readiness memory is exactly $J\ell$ bits; for $J=256$ and $\ell=256$ this is 8~KiB, plus the context, confirmed state, and reconciliation metadata.  The per-event readiness work is $JL$ evaluations and the response uses one selected lane.  Section~\ref{sec:honest-cost} gives the resulting feasibility envelope.

For the analysis, the domain-separated random oracles have explicit output lengths
\[
\begin{aligned}
H_{\rm ctx}&:\bits^*\to\bits^\lambda,
& H_{\rm rd}&:\bits^*\to\bits^\ell,\\
H_{\rm resp}&:\bits^*\to\bits^{\kappa_{\rm resp}},
& H_{\rm root},H_{\rm next}&:\bits^*\to\bits^{\lambda}.
\end{aligned}
\]
The state-transition hashes used only after successful authentication may be domain-separated instances of the same implementation hash; they are not part of the retained-state lower-bound game.  The device-facing configuration consists of the source/reconciliation adapter, the hash/authenticator choices, and $(J,L,T,\Delta)$.  Quantities such as $d_\Delta$, $Q_\Delta$, and $b_\Delta$ are attacker-side calibration parameters used to certify a deployment; they do not add messages, branches, or retained structures to the honest protocol.

\subsection{Notation for the analysis}
\begin{center}
\begin{tabular}{p{0.20\linewidth}p{0.72\linewidth}}
\toprule
Symbol & Meaning \\
\midrule
$E_t$ & canonical accepted shared event; \\
$U_t$ & contextual accumulator after event $t$; \\
$X_{t,j}$ & readiness state of lane $j$; \\
$J,\ell,w=J\ell$ & lane count, lane width, and honest readiness width; \\
$L$ & sequential readiness depth added per accepted event; \\
$\Delta$ & post-selection response window; \\
$d_\Delta$ & maximum dependent readiness depth executable within $\Delta$; \\
$Q_\Delta$ & maximum target-bearing candidate-query budget in one selected-target reconstruction within $\Delta$; \\
$V_{\rm pre}^{\rm free}$ & zero-cost adversarial view available before challenge selection; \\
$\tau_\Delta$ & mature-prefix frontier; \\
$M_\Delta$ & log-cardinality of deadline-accessible state retained across the selection cut; \\
$b_\Delta(A)$ & log-cardinality of residual cross-target protected query/response transcripts for target set $A$; \\
$K_\beta$ & number of mature branch--lane targets required to cover mass $\beta$; \\
\bottomrule
\end{tabular}
\end{center}

\subsection{Confirmed state and one realized transactional trajectory}

At the start of epoch $k$, legitimate participants $A$ and $B$ share a confirmed state
\[
S_k\in\bits^{\lambda}.
\]
The protocol generates a fresh attempt identifier $a_k$ and public salt $\sigma_k$. The pending contextual accumulator is initialized by a domain-separated hash,
\begin{equation}
U_0
=H_{\rm ctx}(\texttt{ACC},S_k,k,a_k,\sigma_k).
\label{eq:init-context}
\end{equation}
Each readiness lane is initialized independently in the analysis domain,
\begin{equation}
X_{0,j}
=H_{\rm rd}(\texttt{INIT},S_k,k,a_k,\sigma_k,j),
\qquad j=1,\ldots,J.
\label{eq:init-readiness}
\end{equation}
The honest participant maintains one realized $U_t$ and one realized vector $(X_{t,1},\ldots,X_{t,J})$. It never enumerates candidate histories.

\subsection{Canonical accepted shared events}

A deployment-specific bounded reconciliation procedure maps noisy local observations to a canonical event symbol or failure:
\[
(E_t^A,E_t^B,\mathcal H_t)
\leftarrow
\mathsf{SharedEventRec}(W_t^A,W_t^B).
\]
The helper transcript $\mathcal H_t$ is public and is included in $V_{\rm pre}^{\rm free}$. Acceptance/rejection patterns, retries, timings, and other zero-cost pre-challenge leakage are included there as well. The intended correctness properties are
\begin{align}
\Prb[E_t^A=E_t^B\neq\bot] &\ge 1-\eps_{\rm rec},\\
\Prb[E_t^A\neq E_t^B,\ E_t^A,E_t^B\neq\bot] &\le \delta_{\rm sync},
\end{align}
with bounded honest work and memory. If reconciliation fails, the protocol does not create honest-side candidate branches. It pauses, aborts, or rolls back according to policy.

Accepted event encodings must be canonical and self-delimiting. With a prefix-free or length-delimited encoding $\operatorname{Enc}$,
\begin{equation}
U_t
=H_{\rm ctx}
(\texttt{CTX},k,a_k,t,U_{t-1},\operatorname{Enc}(E_t)).
\label{eq:context-update}
\end{equation}
Including $(k,a_k,t)$ prevents accidental replay, cross-epoch confusion, and ambiguous concatenation.

\subsection{Streaming readiness update}

For each accepted event and each lane, define
\begin{align}
X_{t,j}^{(0)}&=X_{t-1,j},\\
X_{t,j}^{(r)}
&=H_{\rm rd}
(\texttt{READY},k,a_k,t,j,r,U_t,X_{t,j}^{(r-1)}),
\quad r=1,\ldots,L,\\
X_{t,j}&=X_{t,j}^{(L)}.
\label{eq:streaming-readiness}
\end{align}
The honest state width devoted to readiness is
\begin{equation}
w=J\ell.
\label{eq:w-def}
\end{equation}
The nominal readiness work per accepted event is $JL$ evaluations of the readiness primitive.

\paragraph{Why streaming instead of an epoch-end gate.}
The sequential work is distributed across the period in which shared experience is accumulating. The honest device therefore does not need to perform a large burst of work after the epoch has ended. More importantly, the security mechanism aligns with the underlying intuition: contextual branches become progressively more expensive to reconstruct as they age.

\subsection{Freeze and fresh random-access challenge}

After $T$ accepted events, the pending context and readiness vector are frozen. Before any cheap digest of the final pending state is published, the verifier samples a fresh random lane
\[
j\overset{\$}{\leftarrow}\{1,\ldots,J\}
\]
and fresh nonce $r$. A directional response has the idealized form
\begin{equation}
R_B
=H_{\rm resp}
(\texttt{B2A},k,a_k,j,r,TH,X_{T,j}).
\label{eq:response}
\end{equation}
Here $TH$ is a canonical digest of the public epoch transcript and direction-specific session data. The reverse direction uses an independently sampled $(j',r')$ and a distinct domain string \texttt{A2B}. Reflection and session substitution are therefore excluded by construction. A deployment limits proof opportunities per direction and attempt; retries use a fresh attempt identifier.

The response deadline $\Delta$ is measured from the verifier's local challenge-send time. Adversarial packet delay cannot extend the verifier's deadline; it can only harm liveness.

The retained-state theorem targets reconstruction of the selected final readiness state $X_{T,j}$ by the deadline.  The nonce-bound response is a conventional authentication wrapper around that state.  If the chosen wrapper has forgery probability at most $\eps_{\rm auth}$ without the correct selected-lane state, then end-to-end one-shot impersonation is bounded by the reconstruction probability plus $\eps_{\rm auth}$.  Equation~\eqref{eq:response} is the random-oracle idealization of that wrapper; a concrete implementation may instead use a standard nonce-bound MAC/KDF construction.

\subsection{Transactional advance and rollback}

After both directional challenges succeed, the parties may derive a new confirmed state from the realized contextual and readiness state, for example
\begin{equation}
R_T^{\rm root}
=H_{\rm root}(\texttt{ROOT},X_{T,1},\ldots,X_{T,J}),
\end{equation}
\begin{equation}
S_{k+1}
=H_{\rm next}
(\texttt{NEXT},S_k,k,a_k,\sigma_k,U_T,R_T^{\rm root}).
\label{eq:next-state}
\end{equation}
A lossy network cannot guarantee an instantaneous simultaneous commit. The honest implementation may therefore retain the pair $(S_k,S_{k+1})$ during a bounded handover window. This is an $O(1)$ transactional duplication of the realized trajectory, not candidate enumeration. If handover fails, the protocol rolls back or resynchronizes according to a bounded policy.

\section{Threat Model and Security Objectives}

The adversary may observe all public traffic, helper data, timing information, acceptance/rejection patterns, aborts, retries, attempt identifiers, salts, challenges, and confirmation messages.  The source model may also allow event injection, delay, suppression, or partial observation.  Every such capability is reflected in the posterior over accepted shared experience; raw sensor entropy is never credited directly.

The adversary is unrestricted in algorithmic strategy but bounded by the resources that can affect the response before the deadline.  The reconstruction game strengthens the attacker by revealing the true mature branch at selection time and by assigning no security credit to secrecy of the pre-epoch state $S_k$ or other initialization data.  Security therefore comes from preparation induced by contextual uncertainty, not from hiding an old key.

Three online quantities are calibrated for a deadline $\Delta$:
\begin{itemize}[leftmargin=2em,itemsep=0.2em]
\item $d_\Delta$: maximum sequential dependency depth executable after selection;
\item $Q_\Delta$: maximum number of target-bearing candidate values that can be tested in one selected-target reconstruction;
\item $b_\Delta(A)$: residual table-dependent cross-target protected query/response transcript complexity across the counterfactual reconstructions used for a target set $A$.
\end{itemize}
The preselection quantity $M_\Delta$ is the log-cardinality of state that remains accessible quickly enough to influence the response.  The adversary may choose query addresses adaptively, use massive parallelism, query selected or unselected protected prefixes, and store any nonlinear encoding before selection.  A dependent round is a computation layer whose queries may depend on earlier rounds but not on answers returned in the same round.

The main theorem is quantitative: for a target success probability $p$, it computes a necessary lower bound $M_{\min}(p;d_\Delta,Q_\Delta,b_\Delta,\pi)$ on retained state.  Thus a deployment obtains a direct security test:
\begin{equation}
C_{\rm RA}(\Delta)<M_{\min}
\quad\Longrightarrow\quad
\Pr[\mathsf{reconstruct\ selected\ readiness\ state}]<p,
\label{eq:deployment-security-test}
\end{equation}
where $C_{\rm RA}(\Delta)$ is the attacker's deadline-reachable storage capacity under the deployment's systems model.  Security is parameterized directly by retained-state capacity: for each specified success level, the theorem gives the minimum deadline-reachable state compatible with that level under the calibrated online resource bounds.

For the outer response authenticator, let $\eps_{\rm auth}$ bound a successful nonce-bound forgery without the correct selected readiness state.  Then the usual composition gives
\begin{equation}
\Pr[\mathsf{impersonation}]
\le
\Pr[\mathsf{selected\ state\ reconstruction}]+\eps_{\rm auth}.
\label{eq:auth-composition}
\end{equation}
The remainder of the proof therefore focuses on the reconstruction term.

\section{Mature Contextual Coverage}

\subsection{Accepted-source posterior and coverage}
A deployment may start from noisy correlated observations.  Security is defined on the accepted event process after reconciliation, not on raw sensor entropy.  For the mature accepted prefix $\mathbf E_\tau$, define
\begin{equation}
\pi_v(e)=\Prb[\mathbf E_\tau=e\mid V_{\rm pre}^{\rm free}=v],
\label{eq:source-posterior}
\end{equation}
where the free preselection view contains helper data, accept/reject history, retries, timing, and all other zero-cost information available before challenge selection.  This convention automatically charges reconciliation leakage and selective acceptance to the source posterior.

For any discrete posterior $\pi$ over candidate histories, define the best mass covered by $q$ candidates and the corresponding coverage number by
\begin{align}
C_q(\pi)
&=\max_{A:\,|A|\le q}\sum_{u\in A}\pi(u),
\label{eq:coverage-q}\\
K_p(\pi)
&=\min\{q:C_q(\pi)\ge p\}.
\label{eq:Kp}
\end{align}
Coverage is the operational source quantity: it asks how many histories must be prepared to cover a desired success mass.  If the pointwise conditional min-entropy is at least $h$, then $C_q(\pi_v)\le q2^{-h}$ and
\begin{equation}
K_p(\pi_v)\ge\lceil p2^h\rceil.
\label{eq:Kp-minentropy}
\end{equation}
Exact posterior coverage is preferable whenever it can be estimated.

\subsection{Maturity frontier}
One accepted event adds $L$ sequential readiness layers.  If at most $d_\Delta$ dependent layers can be executed after selection, then the most recent
\begin{equation}
r_\Delta=\min\left\{T,\left\lfloor\frac{d_\Delta}{L}\right\rfloor\right\}
\label{eq:r-delta}
\end{equation}
events can be replayed inside the deadline.  The mature-prefix length is therefore
\begin{equation}
\boxed{
\tau_\Delta
=T-r_\Delta
=\max\left\{0,T-\left\lfloor\frac{d_\Delta}{L}\right\rfloor\right\}.}
\label{eq:tau-delta}
\end{equation}
Only uncertainty in $\mathbf E_\tau=(E_1,\ldots,E_{\tau_\Delta})$ receives one-epoch security credit.  Recent uncertainty is deliberately excluded because it can be regenerated after selection.

A convenient sufficient source condition is
\begin{equation}
\max_e\Prb[E_t=e\mid E_{<t}=e_{<t},V_{\rm pre}^{\rm free}=v]
\le2^{-\eta_t},
\qquad t\le\tau_\Delta.
\label{eq:mature-event-condition}
\end{equation}
The chain rule then gives
\begin{equation}
\Hinf(\mathbf E_\tau\mid V_{\rm pre}^{\rm free}=v)
\ge h_{\rm src}:=\sum_{t=1}^{\tau_\Delta}\eta_t,
\label{eq:hsrc}
\end{equation}
and if $\eta_t\ge\eta$,
\begin{equation}
h_{\rm src}
\ge
\eta\max\left\{0,T-\left\lfloor\frac{d_\Delta}{L}\right\rfloor\right\}.
\label{eq:mature-rate}
\end{equation}
A measured or certified coverage bound may be used directly instead of this min-entropy route.

\subsection{Compact folding and coverage transfer}
The contextual accumulator $U_t$ is a deterministic folding of the accepted history, so hashing cannot be assumed to preserve entropy automatically.  Choose a high-probability source core $\mathcal A_v$ with
\begin{equation}
\Prb[\mathbf E_\tau\in\mathcal A_v\mid v]
\ge1-\eps_{\rm tail},
\qquad
K_v:=|\mathcal A_v|,
\label{eq:source-core}
\end{equation}
for example the smallest top-probability set of mass $1-\eps_{\rm tail}$.

For two distinct source-induced histories in this core, a random-oracle union bound over at most $T$ aligned contextual updates gives
\begin{equation}
\Prb[\text{their contextual trajectories merge by time }T]
\le\frac{T}{2^{\lambda}},
\label{eq:pair-fold-collision}
\end{equation}
and hence
\begin{equation}
\delta_{\rm fold}
\le
\binom{K_v}{2}\frac{T}{2^{\lambda}}.
\label{eq:core-fold-collision}
\end{equation}
This bound applies when the source-induced event sequences are fixed independently of an adaptive search for $H_{\rm ctx}$ collisions.  If the source model permits active collision search using $q_{\rm ctx}$ oracle answers, the corresponding adaptive random-oracle collision term is added and $q_{\rm ctx}$ is part of the attacker model.

Let $B_\tau$ denote the folded mature branch.  On injectivity over the source core, $q$ folded branches contain at most $q$ histories from the core, while probability outside the core is at most $\eps_{\rm tail}$.  Therefore
\begin{equation}
\boxed{
C_q(B_\tau\mid v)
\le
C_q(\mathbf E_\tau\mid v)+\eps_{\rm tail},}
\label{eq:coverage-transfer}
\end{equation}
and for $p>\eps_{\rm tail}$,
\begin{equation}
K_p(B_\tau\mid v)
\ge
K_{p-\eps_{\rm tail}}(\mathbf E_\tau\mid v).
\label{eq:Kp-transfer}
\end{equation}
Under the min-entropy condition, this implies
\begin{align}
C_q(B_\tau\mid v)&\le q2^{-h_{\rm src}}+\eps_{\rm tail},
\label{eq:coverage-transfer-h}\\
K_p(B_\tau\mid v)&\ge\left\lceil(p-\eps_{\rm tail})2^{h_{\rm src}}\right\rceil.
\label{eq:Kp-transfer-h}
\end{align}
A convenient sizing rule for $\delta_{\rm fold}\le2^{-\kappa}$ is
\begin{equation}
\lambda
\gtrsim
2\log_2K_v+\log_2T+\kappa,
\label{eq:lambda-core-rule}
\end{equation}
up to the small $K_v(K_v-1)/2$ correction.  The high-probability core, rather than min-entropy alone, therefore controls contextual-state width.

\section{Causal Readiness and the Retained-State Bound}

The security mechanism can be summarized before the proof: if a non-negligible success probability requires readiness for $K$ mature branch--lane targets, then roughly $K\ell$ protected bits must cross the selection cut, reduced only by the explicitly counted post-selection candidate/depth choices and protected information actually obtained online.  The following construction makes that statement precise for arbitrary preselection encoding.

\subsection{Standard-ROM experiment and protected-slice coupling}

The security proof is stated in the standard random-oracle model.  A deferred-decision representation isolates the finite protected readiness values used in the counting argument; it does not change the oracle available to the adversary.  We treat
\[
H_{\rm ctx},\qquad H_{\rm rd}
\]
as independent random oracles for the core reconstruction theorem; a concrete implementation may use one hash with explicit domain separation.

Fix a preselection free view $V_{\rm pre}^{\rm free}=v$ of positive probability and a finite mature branch--lane target set $A$ of size $K$.  All target weights below are the posterior weights induced by this fixed $v$.  Assume $\tau_\Delta\ge1$; if $\tau_\Delta=0$, no contextual prefix is mature and this one-epoch source mechanism receives no coverage credit.  From the readiness state immediately before the update associated with event $\tau_\Delta$ to the final epoch state, the honest chain contains
\[
(T-\tau_\Delta+1)L=(r_\Delta+1)L>d_\Delta
\]
dependent readiness transitions.  For the lower-bound game we keep only the final $d_\Delta+1$ transitions of this path and reveal the state immediately before them as a free cold anchor.  Discarding earlier transitions and revealing this branch-specific anchor can only help the adversary, while leaving exactly the one-round causal deficit needed by the counting argument.

For target $i\in A$, write this final protected readiness suffix as
\begin{equation}
Y_i=(Y_{i,0},\ldots,Y_{i,d}),\qquad d=d_\Delta,
\label{eq:protected-suffix}
\end{equation}
with $Y_{i,-1}$ the cold anchor immediately before the suffix.  The lower-bound game gives the selected target's cold anchor to the online adversary for free.

Each protected transition has a readiness-oracle input of the form
\[
H_{\rm rd}(\rho_{i,s},Y_{i,s-1}),
\]
where $\rho_{i,s}$ contains every input field except the predecessor state: the domain string, epoch and attempt identifiers, event/layer indices, lane identifier, and contextual accumulator for the corresponding branch.  Define the random-function slice
\begin{equation}
F_{i,s}(x):=H_{\rm rd}(\rho_{i,s},x),
\qquad x\in\bits^\ell.
\label{eq:ro-slice}
\end{equation}

\begin{assumption}[Source--readiness separation]
Conditioned on $V_{\rm pre}^{\rm free}=v$, the credited target distribution, the finite source core, and the protected-prefix labels $\rho_{i,s}$ are measurable with respect to source randomness, $H_{\rm ctx}$, and public protocol data that are independent of the as-yet-unseen protected $H_{\rm rd}$ slices.  Equivalently, contextual security credit is not assigned to a source/core selection rule that is itself chosen as a function of protected readiness-oracle outputs.  Attacker-controlled source choices that violate this separation receive no contextual security credit.
\label{ass:source-readiness-sep}
\end{assumption}

Let $\mathsf{GoodCtx}_A$ be the event, determined entirely by the source view and $H_{\rm ctx}$, that the contextual accumulators associated with distinct mature branches in $A$ remain separated throughout the protected suffix.  Because $\rho_{i,s}$ also contains the lane and transition indices, $\mathsf{GoodCtx}_A$ makes all prefixes $\rho_{i,s}$ distinct.  Restrictions of a random oracle to these disjoint prefix domains are therefore independent random functions $F_{i,s}$.

This gives an exact hidden-point coupling of the standard ROM.  For every protected coordinate $(i,s)$, independently sample a full base function
\[
G_{i,s}:\bits^\ell\to\bits^\ell
\]
and an independent value $Y_{i,s}\leftarrow\bits^\ell$.  Let the hidden point be
\[
P_{i,0}=Y_{i,-1},\qquad
P_{i,s}=Y_{i,s-1}\quad(s\ge1),
\]
and define
\begin{equation}
F_{i,s}(x)=
\begin{cases}
Y_{i,s}, & x=P_{i,s},\\
G_{i,s}(x), & x\ne P_{i,s}.
\end{cases}
\label{eq:hidden-point-coupling}
\end{equation}
Since $P_{i,s}$ is determined before the independent slice $F_{i,s}$ is sampled, replacing the base-function value at that one point by an independent uniform value leaves $F_{i,s}$ distributed exactly as a uniform random function.  Recursing over $s$ therefore produces exactly the same joint distribution as the original standard-ROM readiness chain.  The coupling is a proof representation, not a change to the adversary's oracle.

Let
\begin{equation}
\mathcal Y_A:=\prod_{i\in A}\prod_{s=0}^{d}\bits^\ell,
\qquad
D_A:=K(d+1)\ell.
\label{eq:protected-table-space}
\end{equation}
The finite protected table $Y_A\in\mathcal Y_A$ is uniform conditioned on $(v,\mathsf{GoodCtx}_A)$ and on all base functions $G_{i,s}$ and random-oracle slices outside the protected prefixes.  Denote this fixed proof environment by $\Gamma_A=\gamma$.  Importantly, $\Gamma_A$ is independent of $Y_A$ in the coupling.  Wrong predecessor guesses are answered by the fixed base functions $G_{i,s}$; only a query at the hidden point $P_{i,s}$ reveals the protected value $Y_{i,s}$.

The preselection retained state is an arbitrary encoding
\begin{equation}
M=\mathsf{Enc}(v,\gamma,Y_A)\in\mathcal M,
\qquad
\log_2|\mathcal M|\le m.
\label{eq:advice-encoder}
\end{equation}
No linearity, per-branch layout, or separability of $\mathsf{Enc}$ is assumed.  In particular, XOR/parity and nonlinear cross-target encodings are included.  ``Retained information'' is used operationally for this log-cardinality bound on the deadline-accessible state; it is not Shannon mutual information.

For each selected target $i$, an online decoder receives $(v,M,i,Y_{i,-1})$ and ordinary adaptive access to the original random oracles represented by the coupling above.  It must output $Y_{i,d}$ by the deadline.  The decoder may query protected slices $F_{j,s}$ for any plausible target $j$, not only the selected target.  Its dependent depth is at most $d_\Delta$, and at most $Q_\Delta$ concrete values are submitted in target-bearing positions relevant to reconstructing the selected chain, including protected predecessor candidates and the final output candidate.

Cross-target protected queries can carry genuine online information about the protected table through table-dependent query addresses and, on correct hidden-point hits, through protected values returned online.  They are therefore accounted for explicitly.  For a deterministic reconstruction of selected target $i$, let $Z_i(Y_A)$ be the ordered subtranscript of every query to a protected slice $F_{j,s}$ with $j\ne i$ that is completed before the final output.  Each record contains the round/order metadata, the slice identifier $(j,s)$, the queried predecessor string, and the returned value.  Recording the full foreign protected subtranscript is conservative but important: a wrong-point response is fixed once its query address and the base function are fixed, yet the query address itself may still depend on the protected table.  For a target set $A$ and fixed retained-state value $m_0$, concatenate the canonical counterfactual subtranscripts as
\[
Z_A(Y_A;m_0,\gamma):=(Z_i(Y_A))_{i\in A}.
\]
Define the aggregate online protected side-information complexity
\begin{equation}
b_\Delta(A)
:=
\max_{m_0,\gamma}
\log_2\left|
\left\{Z_A(y;m_0,\gamma):
y\in\mathcal Y_A,
\ \mathsf{Enc}(v,\gamma,y)=m_0
\right\}
\right|.
\label{eq:online-sideinfo}
\end{equation}
The definition is a conditional transcript log-cardinality: it counts only the residual table dependence of cross-target protected access after the retained state is fixed.  A query address or returned value already determined by $M$ contributes no additional $b_\Delta$; newly exposed protected information contributes according to the number of transcript possibilities it creates.  If the foreign protected subtranscript is fixed within each retained-state fiber, then $b_\Delta(A)=0$.  In particular, this holds when no table-dependent foreign protected query occurs.  Recording the complete protected subtranscript is intentionally sufficient for deterministic replay and avoids treating either a table-dependent helper address or its response as free information.

The proof may condition on $\Gamma_A=\gamma$, but physical executions still pay for all computation and oracle evaluations.  Randomized adversaries are reduced to deterministic ones below.

\subsection{Readiness lead and the causal selection cut}

Number post-selection round boundaries by $r=0,1,\ldots,d$, where $d=d_\Delta$.  Queries issued in round $r+1$ may depend on values known at boundary $r$, and their answers become available at boundary $r+1$.  The final reconstruction output is due at boundary $d$.

If protected state $Y_{i,s}$ first becomes available for use at boundary $r$, define its readiness lead by
\begin{equation}
\Lambda(i,s,r)=s-r.
\label{eq:readiness-lead}
\end{equation}
The free cold anchor $Y_{i,-1}$ is known at boundary zero and therefore has lead $-1$.  A correct ordinary readiness transition maps a predecessor known at boundary $r$ to its successor at boundary $r+1$; increasing both suffix position and boundary index by one preserves $\Lambda$.  Cold forward computation therefore remains at lead $-1$ throughout the suffix.

If $Y_{i,s}$ is available at boundary $r$, the remaining path to $Y_{i,d}$ has depth $d-s$.  Deadline success requires
\[
r+(d-s)\le d,
\]
which is equivalent to $\Lambda\ge0$.  Thus every successful reconstruction must obtain at least one selected-chain value with nonnegative lead from something other than ordinary cold forward computation.

\subsection{Causal witnesses with cross-target online side information}

For the reconstruction execution of selected target $i$, let $r_i(s)$ be the earliest round boundary at which the exact protected value $Y_{i,s}$ is operationally available for subsequent use.  A value is operationally available at boundary $r$ if it has been returned by a query completed by that boundary, or if the decoder already holds the exact string and can name it in a target-bearing query issued in round $r+1$; for $s=d$, emitting it as the final output at boundary $r$ also counts.  Set $r_i(-1)=0$ for the free cold anchor.  A latent local string that never enters a protected predecessor position or the final output is irrelevant to reconstruction and is not counted as an exposure.

\begin{lemma}[First timely witness is supplied rather than derived]\label{lem:first-timely-witness}
If the decoder for target $i$ outputs $Y_{i,d}$ by boundary $d$, then there exists a suffix position $s_i\in\{0,\ldots,d\}$ such that
\[
r_i(s_i)\le s_i,
\]
and the first timely availability of $Y_{i,s_i}$ is a concrete target-bearing candidate exposure rather than the answer of the correct ordinary transition $F_{i,s_i}(Y_{i,s_i-1})$.
\end{lemma}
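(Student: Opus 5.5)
The plan is to reason on the readiness lead $\Lambda(i,s,r)=s-r$ along the selected chain and to take the \emph{first} suffix position at which the lead becomes nonnegative.

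\textbf{Step 1: a timely position exists.} Success means $Y_{i,d}$ is operationally available by boundary $d$, so $r_i(d)\le d$. Hence the set
\[
S^\star=\{s\in\{0,\ldots,d\}: r_i(s)\le s\}
\]
is nonempty. Let $s_i=\min S^\star$. By the convention $r_i(-1)=0$, the cold anchor has lead $-1$.

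\textbf{Step 2: the predecessor is not timely.} Minimality gives $r_i(s_i-1)\ge s_i$ when $s_i\ge1$, since $r_i(s_i-1)>s_i-1$ and boundaries are integers. For $s_i=0$ the predecessor is the anchor, with $r_i(-1)=0=s_i$. So in both cases the predecessor becomes available no earlier than boundary $s_i$.

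\textbf{Step 3: rule out the ordinary transition.} Suppose the first timely availability of $Y_{i,s_i}$ were the answer of the correct transition $F_{i,s_i}(Y_{i,s_i-1})$. That query names $P_{i,s_i}=Y_{i,s_i-1}$ in a target-bearing predecessor position. By the round semantics, it can be issued only in a round $r+1$ with $Y_{i,s_i-1}$ available at boundary $r$, that is $r\ge r_i(s_i-1)\ge s_i$. Its answer therefore arrives at boundary $r+1\ge s_i+1>s_i$. This contradicts $r_i(s_i)\le s_i$ being witnessed by that answer.

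I will make explicit that ``issued with $P_{i,s_i}$ as input'' requires the decoder to hold the exact string $Y_{i,s_i-1}$ at that time. Holding it is exactly operational availability of $Y_{i,s_i-1}$, so the round in which the query is issued is bounded below by $r_i(s_i-1)+1$.

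\textbf{Step 4: identify the supplied witness.} Every other way of making $Y_{i,s_i}$ operationally available is, by the definition of operational availability, one of two things: the decoder already holds the exact string and names it in a target-bearing query or as the final output, or the string is returned by some query other than the correct transition. I will classify both as concrete target-bearing candidate exposures. A returned value from a wrong-point query of the same slice is $G_{i,s_i}(x)$ for $x\neq P_{i,s_i}$; it enters reconstruction only when named as a predecessor or output candidate, which is itself an exposure. A return from a foreign slice likewise counts only through such naming. This gives the statement.

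\textbf{Main obstacle.} The delicate point is Step 3: pinning down that the correct hidden-point query cannot be issued before the predecessor is operationally available. Latent strings that coincide with $Y_{i,s_i-1}$ without being named could otherwise create a loophole. I would close it by observing that issuing the query with that exact input \emph{is} naming it in a protected predecessor position. Under the definition this makes $Y_{i,s_i-1}$ available at the preceding boundary, so the inequality $r\ge r_i(s_i-1)$ holds by definition rather than by any independence argument.
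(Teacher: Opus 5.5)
Your proposal is correct and follows the paper's proof. Both take the minimal timely position $s_i$, use integrality to get $r_i(s_i-1)\ge s_i$, and note that the correct transition $F_{i,s_i}(Y_{i,s_i-1})$ can be issued only once its predecessor is available and answers one boundary later, which forces $r_i(s_i)\ge s_i+1$. The differences are cosmetic: you fold the paper's separate $s_i=0$ case into the same inequality via $r_i(-1)=0$, and your Step 4 spells out the classification of non-transition availability that the paper simply asserts.
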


\begin{proof}
Success gives $r_i(d)\le d$, so at least one suffix position satisfies $r_i(s)\le s$.  Let $s_i$ be the smallest such position.  If $s_i=0$, the correct transition $F_{i,0}(Y_{i,-1})$ can return only at boundary one, so availability by boundary zero must come from a supplied candidate.  If $s_i>0$, minimality and integrality give $r_i(s_i-1)\ge s_i$.  A correct transition producing $Y_{i,s_i}$ can be issued only after the predecessor is available, and its answer appears one boundary later.  It would therefore satisfy
\[
r_i(s_i)\ge r_i(s_i-1)+1\ge s_i+1,
\]
contradicting $r_i(s_i)\le s_i$.  Hence the first timely selected-chain value is supplied from retained or online-derived information and is exposed in a target-bearing predecessor position, or, when $s_i=d$, as the final output candidate.
\end{proof}

For a successful target, call the position $s_i$ chosen above its \emph{witness position}, and call the corresponding concrete exposure of the exact string $Y_{i,s_i}$ its \emph{witness exposure}.  There are at most $Q_\Delta$ target-bearing candidate exposures in the reconstruction execution, where an exposure is an $\ell$-bit value submitted in any protected readiness-predecessor position relevant to the reconstruction, or emitted as the final selected-state candidate.  This includes predecessor values used in foreign protected queries.  Fix any deterministic ordering of these exposures by round and local operation number.

\begin{lemma}[Causal certificate counting with online protected side information]\label{lem:causal-certificate}
Fix a target set $A$ of size $K$, a retained-state value $m_0\in\mathcal M$, and a vector of witness positions $(s_i)_{i\in A}$.  Suppose every target in $A$ is reconstructed successfully by the corresponding adaptive decoder.  Then the number of protected tables in $\mathcal Y_A$ compatible with that retained-state value and witness-position vector is at most
\begin{equation}
2^{b_\Delta(A)}
\bigl(1+Q_\Delta\bigr)^K
2^{D_A-K\ell}.
\label{eq:certificate-fiber}
\end{equation}
No branchwise decomposition of the preselection state is assumed, and protected queries may address unselected targets.
\end{lemma}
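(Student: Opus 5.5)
We prove the bound with an encoding (compression) argument. Fix $m_0$, $\gamma$, and $(s_i)_{i\in A}$. We build an injective map from the compatible tables into a set of descriptions of total log-size $b_\Delta(A)+K\log_2(1+Q_\Delta)+D_A-K\ell$. The description of a compatible table $y$ has three parts:
\begin{enumerate}
\item the foreign protected subtranscript $Z_A(y;m_0,\gamma)$, which by \eqref{eq:online-sideinfo} takes at most $2^{b_\Delta(A)}$ values within the fiber $\mathsf{Enc}(v,\gamma,y)=m_0$;
\item for each $i\in A$, an index $c_i\in\{0,1,\ldots,Q_\Delta\}$ naming which of the at most $Q_\Delta$ ordered target-bearing exposures in the reconstruction of $i$ is the witness exposure of Lemma~\ref{lem:first-timely-witness};
\item the table $y$ with the $K$ witness coordinates $(Y_{i,s_i})_{i\in A}$ deleted, an explicit string of $D_A-K\ell$ bits.
\end{enumerate}
The extra symbol $0$ in $c_i$ is a slack value, so the count is $1+Q_\Delta$ and no ordering edge cases need care.

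The decoder, given $(m_0,\gamma,v)$ and the description, reconstructs $y$ as follows. For each $i$ it runs the deterministic online decoder on input $(v,m_0,i,Y_{i,-1})$. The anchor $Y_{i,-1}$ is an environment or known value, or is given as a non-witness coordinate. The simulation answers each oracle query using three sources. Queries outside the protected prefixes, and wrong-point queries, are answered by $\gamma$ through the base functions $G_{i,s}$. Protected queries $F_{j,s}$ with $j\neq i$ are answered from the stored transcript $Z_i$, which records addresses and returned values. Hidden-point queries on the selected chain $i$ are answered from the stored non-witness coordinates. The one missing value $Y_{i,s_i}$ is never needed as an oracle answer before it is exposed. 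By Lemma~\ref{lem:first-timely-witness}, its first timely availability is a supplied exposure rather than the answer of $F_{i,s_i}(Y_{i,s_i-1})$. When the simulation reaches exposure number $c_i$, it reads $Y_{i,s_i}$ off that submitted string, and it can then answer any later query needing it. Running this over all $i\in A$ recovers every witness coordinate, and hence all of $y$, so the map is injective. Multiplying the three counts gives \eqref{eq:certificate-fiber}.

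The delicate step is checking that the simulation is faithful up to the witness exposure. Concretely, every oracle answer the real decoder would receive before exposure $c_i$ must be determined by the description without knowing $Y_{i,s_i}$. Two points carry this.

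First, we must deal with a possible query to $F_{i,s_i}$ at the true hidden point $Y_{i,s_i-1}$ that returns before the witness boundary. This is excluded by minimality of $s_i$, since the predecessor is available only at boundary $\ge s_i$. We also need that whether a query hits the hidden point of $F_{i,s_i+1}$ is decidable. The hidden point of $F_{i,s_i+1}$ is $Y_{i,s_i}$ itself, so for queries to $F_{i,s_i+1}$ issued before exposure $c_i$ we argue as follows. Either the query string equals $Y_{i,s_i}$, in which case it is itself a target-bearing exposure of that value and contradicts the choice of the first exposure (the ordering of exposures handles this). Or it does not, and $G_{i,s_i+1}\in\gamma$ answers it.

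Second, cross-target queries may have table-dependent addresses. Recording the full foreign subtranscript in $Z_i$, rather than only the returned values, is exactly what keeps replay deterministic here.

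We treat randomized adversaries by fixing coins into $\gamma$, as the text indicates. Because no structure of $\mathsf{Enc}$ is used beyond fixing $m_0$, the count holds for arbitrary nonlinear encodings.
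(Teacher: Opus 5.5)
Your proposal is correct and takes essentially the same route as the paper. Both fix the aggregate foreign protected subtranscript (charged by $2^{b_\Delta(A)}$), delete the $K$ witness values, record each witness-exposure ordinal among $1+Q_\Delta$ possibilities, and replay each deterministic decoder, using Lemma~\ref{lem:first-timely-witness} to show the deleted value is never needed as an oracle answer before its exposure. Your explicit handling of the hidden-point test for $F_{i,s_i+1}$, by taking $c_i$ to be the first exposure of the exact string, matches the paper's remark that any query submitting the deleted value as a predecessor is itself an exposure no later than the recorded ordinal.
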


\begin{proof}
Fix the retained-state value and additionally fix one aggregate cross-target protected subtranscript $Z_A=z$.  By Eq.~\eqref{eq:online-sideinfo}, at most $2^{b_\Delta(A)}$ such transcripts need be considered.

For each successful table, delete the $K$ selected-chain witness values $Y_{i,s_i}$ and keep every other protected coordinate.  Record, for every target $i$, the ordinal of its witness exposure among these target-bearing candidate exposures; each ordinal has at most $1+Q_\Delta$ possibilities.  If a deleted witness occurs literally as a queried predecessor or returned value in $z$, it is already recovered from the recorded subtranscript.

Replay the deterministic reconstruction for target $i$ from $(v,m_0,i,Y_{i,-1})$.  The fixed subtranscript $z$ records every foreign protected query and response, so each such interaction is reproduced with its recorded address and answer.  Non-protected oracle behavior and wrong-point answers on the selected protected slices are determined by the fixed proof environment $\Gamma_A$ once their replayed addresses are fixed.  Thus adaptive helper behavior is replayed without treating either a foreign protected address or its response as free; every residual table-dependent variation of those records has already been charged by the factor $2^{b_\Delta(A)}$.

For any witness not already present in $z$, an induction over replay operations shows that the execution reaches its recorded exposure without the deleted value.  The initial replay state is fixed by $(v,m_0,i,Y_{i,-1})$ and the undeleted protected coordinates.  Foreign protected interactions are fixed by $z$, and all other oracle answers before the witness exposure are fixed by $\Gamma_A$ and the replayed query addresses.  The only remaining way for the exact deleted witness to enter the execution earlier would be a correct selected-chain transition returning it, which would make it operationally available by ordinary forward computation and contradict Lemma~\ref{lem:first-timely-witness}.  A query that submits the deleted value as a protected predecessor is itself a target-bearing exposure, so its first such occurrence is no later than the recorded ordinal.  The concrete candidate at that exposure reconstructs $Y_{i,s_i}$.  Repeating this reconstruction for all $i\in A$ recovers all deleted witnesses.

For fixed $(m_0,z,(s_i))$, the encoding consisting of the undeleted $D_A-K\ell$ protected bits and the $K$ candidate ordinals is therefore injective.  Multiplying by the at most $2^{b_\Delta(A)}$ possible foreign protected subtranscripts proves Eq.~\eqref{eq:certificate-fiber}.
\end{proof}

Thus cross-target helpers are neither prohibited nor treated as free: all residual protected-table dependence in their addresses and responses is charged through $b_\Delta(A)$.  Appendix~\ref{app:robustness} gives concrete two-value examples showing why this transcript term is needed.

\subsection{Perfect causal readiness}

Define the causal width
\begin{equation}
\boxed{
a_{\rm causal}
:=
\pos{
\ell
-\log_2\!\bigl(1+Q_\Delta\bigr)
-\log_2(d_\Delta+1)}.}
\label{eq:a-causal}
\end{equation}

\begin{theorem}[Perfect causal readiness with online side information]\label{thm:perfect-causal}
Fix $V_{\rm pre}^{\rm free}=v$, a $K$-target set $A$ satisfying $\mathsf{GoodCtx}_A$, and an arbitrary proof environment $\Gamma_A=\gamma$.  Suppose the adversary's deadline-accessible preselection state has at most $2^m$ possible values and every target in $A$ is answered successfully for every protected table $Y_A\in\mathcal Y_A$ by a decoder of dependent depth at most $d_\Delta$ and candidate budget at most $Q_\Delta$.  Let $b_\Delta(A)$ be defined by Eq.~\eqref{eq:online-sideinfo}.  Then
\begin{equation}
\boxed{m\ge K a_{\rm causal}-b_\Delta(A).}
\label{eq:perfect-causal-bound}
\end{equation}
\end{theorem}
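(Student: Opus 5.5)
The proof is a pure counting argument built on Lemma~\ref{lem:causal-certificate}. Since every target in $A$ is answered successfully for \emph{every} table $Y_A\in\mathcal Y_A$, the whole table space $\mathcal Y_A$, of size $2^{D_A}$, is covered by the fibers of the encoder $\mathsf{Enc}(v,\gamma,\cdot)$ over $m_0\in\mathcal M$. We refine each fiber further by the witness-position vector $(s_i)_{i\in A}$. Lemma~\ref{lem:first-timely-witness} guarantees that each successful reconstruction has some witness position $s_i\in\{0,\ldots,d_\Delta\}$. We fix one canonically, say the smallest, so that every table receives a well-defined vector. There are at most $(d_\Delta+1)^K$ such vectors.

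We then bound each refined class with Lemma~\ref{lem:causal-certificate}. For fixed $m_0$ and $(s_i)$, the number of compatible tables is at most $2^{b_\Delta(A)}(1+Q_\Delta)^K2^{D_A-K\ell}$. One check is needed here: $b_\Delta(A)$ in Eq.~\eqref{eq:online-sideinfo} is a maximum over $(m_0,\gamma)$, so the same factor applies uniformly to every fiber. Summing over all $2^m$ retained values and all witness vectors gives
\[
2^{D_A}\le 2^m\,(d_\Delta+1)^K\,2^{b_\Delta(A)}(1+Q_\Delta)^K\,2^{D_A-K\ell}.
\]
Taking logarithms yields
\[
m\ge K\bigl(\ell-\log_2(1+Q_\Delta)-\log_2(d_\Delta+1)\bigr)-b_\Delta(A).
\]
If the bracket is nonpositive, the claimed bound $m\ge K a_{\rm causal}-b_\Delta(A)=-b_\Delta(A)$ holds trivially because $m\ge0$ and $b_\Delta\ge0$. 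Otherwise the bracket equals $a_{\rm causal}$. Either way Eq.~\eqref{eq:perfect-causal-bound} follows.

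The main obstacle is making the union over witness vectors legitimate. The certificate lemma is stated for a \emph{fixed} witness vector, but its injectivity argument uses $(s_i)$ as known side data during decoding. We must therefore ensure that the canonical choice of $s_i$ is a function of the table, so that each table lands in exactly one class, and that the decoder in the lemma does not need to recompute it. We handle this by charging $\log_2(d_\Delta+1)$ bits per target to record $s_i$ explicitly; this is exactly the term appearing in $a_{\rm causal}$. A smaller point is that the adversary is deterministic here: $\gamma$ is fixed, and any randomness is absorbed into $\gamma$ or handled by the later averaging reduction the paper promises. The hypothesis $\mathsf{GoodCtx}_A$ is what makes $Y_A$ uniform over the full product space $\mathcal Y_A$, which justifies the count $|\mathcal Y_A|=2^{D_A}$.
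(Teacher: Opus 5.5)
Your proposal is correct and follows the paper's proof essentially verbatim. You partition $\mathcal Y_A$ by retained-state value and witness-position vector, apply Lemma~\ref{lem:causal-certificate} uniformly to each class (justified because $b_\Delta(A)$ is a maximum over $(m_0,\gamma)$), compare the $2^m(d_\Delta+1)^K$-fold union with $2^{D_A}$, and clip at zero. One minor remark: success is assumed for every table, so you only need a covering of $\mathcal Y_A$; your canonical choice of $s_i$ and the appeal to uniformity of $Y_A$ are harmless but not needed for the count.
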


\begin{proof}
For one retained-state value there are at most $(d_\Delta+1)^K$ witness-position vectors.  Lemma~\ref{lem:causal-certificate} therefore bounds its total successful-table fiber by
\[
2^{b_\Delta(A)}
(d_\Delta+1)^K
(1+Q_\Delta)^K
2^{D_A-K\ell}.
\]
There are at most $2^m$ retained-state values.  Perfect readiness requires these fibers to cover all $2^{D_A}$ protected tables.  Taking logarithms gives Eq.~\eqref{eq:perfect-causal-bound}, clipped at zero.\end{proof}

\subsection{Partial success from exact mature coverage}

Let a finite mature target core be $\mathcal I=\{1,\ldots,N\}$ with original absolute selection weights $w_i$ and total core mass
\begin{equation}
W_c=\sum_{i\in\mathcal I}w_i\le1.
\end{equation}
A contextual branch of posterior probability $\pi_b$ contributes $J$ target weights $\pi_b/J$ under a uniform lane challenge.

For $0<\beta\le W_c$, define
\begin{equation}
K_\beta(w)
=
\min\left\{|A|:
A\subseteq\mathcal I,
\sum_{i\in A}w_i\ge\beta
\right\}.
\label{eq:Kbeta-target}
\end{equation}
Let $\mathsf{GoodCtx}_{\mathcal I}$ denote contextual separation for the entire finite core; then every subset $A\subseteq\mathcal I$ inherits $\mathsf{GoodCtx}_A$.  All probabilities in this subsection are pointwise in the already fixed free view $v$.  In particular, let
\begin{equation}
p=p(v):=\Prb[\mathsf{Succ}\mid V_{\rm pre}^{\rm free}=v]
\label{eq:pointwise-success}
\end{equation}
be the conditional one-epoch probability of reconstructing the selected final readiness state by the deadline.  Collect the explicit source/synchronization losses, also conditioned on $v$, in the union-bound quantity
\begin{equation}
\eps_{\rm bad}
:=
\min\left\{1,\eps_{\rm sync}+\eps_{\rm fold}+\eps_{\rm src}\right\},
\label{eq:eps-bad-decomp}
\end{equation}
where $\eps_{\rm sync}$ is an epoch-level synchronization-failure bound, $\eps_{\rm fold}$ bounds failure of the contextual-separation event (for example by Eq.~\eqref{eq:core-fold-collision}), and $\eps_{\rm src}$ is an optional failure probability attached to a statistically certified source model or confidence set. Thus the conditional probability of leaving the source-, synchronization-, or contextual-separation-good experiment is at most $\eps_{\rm bad}$. In the ideal source model $\eps_{\rm src}=0$.  The success mass that must be supplied inside the finite core is
\begin{equation}
p_c
=
pos{p-\eps_{\rm bad}-(1-W_c)}.
\label{eq:p-core}
\end{equation}

\begin{lemma}[Determinization]\label{lem:determinization}
Under the pointwise-view convention above, it is sufficient to analyze a deterministic encoder and deterministic online decoders.
\end{lemma}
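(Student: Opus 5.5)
The plan is the standard averaging (coin-fixing) argument, done pointwise in the fixed free view $v$. A randomized adversary consists of an encoder $\mathsf{Enc}(v,\gamma,Y_A;\omega_{\rm pre})$ and online decoders using coins $\omega_{\rm on}$. We write $\omega=(\omega_{\rm pre},\omega_{\rm on})$. First I will note that $\omega$ is sampled independently of the source randomness, of $H_{\rm ctx}$, of the proof environment $\Gamma_A$, and of the protected table $Y_A$. This holds by the coupling of Eq.~\eqref{eq:hidden-point-coupling} together with Assumption~\ref{ass:source-readiness-sep}. Hence, conditioned on $V_{\rm pre}^{\rm free}=v$, we can write
\[
p(v)=\mathbb E_{\omega}\bigl[p_\omega(v)\bigr],
\]
where $p_\omega(v)$ is the conditional success probability of the deterministic adversary obtained by hard-wiring $\omega$. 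Since an average never exceeds a maximum, some fixed $\omega^\ast$ satisfies $p_{\omega^\ast}(v)\ge p(v)$.

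Second, I will check that the hard-wired adversary stays within every resource class used in the argument.

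The retained state still takes values in the same set $\mathcal M$, so $\log_2|\mathcal M|\le m$ is unchanged.

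Dependent depth, the target-bearing candidate budget $Q_\Delta$, and the round structure are per-execution quantities. They therefore hold for every fixed coin string, provided they were assumed in the worst case over coins. If they were only assumed on average, I would restrict to coins meeting the bound and pay a Markov-type loss. I intend to state that the calibration bounds are worst case, as in Section~3.

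The quantity $b_\Delta(A)$ in Eq.~\eqref{eq:online-sideinfo} is already defined for deterministic reconstructions. For the randomized adversary I interpret it as the maximum over coin strings, so the deterministic instance $\omega^\ast$ has $b_\Delta$ at most the calibrated value.

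The post-selection coins $\omega_{\rm on}$ may be correlated with $\omega_{\rm pre}$. Fixing both jointly covers this case: any shared randomness becomes part of the fixed description of $\mathsf{Enc}$ and the decoders.

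Third, I will argue that a lower bound on $m$ for the deterministic adversary transfers to the original one. The deterministic adversary achieves success at least $p$ with retained-state capacity $m$ and the same online resources. Any necessary condition derived for deterministic adversaries at success level $p$, such as Theorem~\ref{thm:perfect-causal} and the partial-success counting that follows it, therefore applies to the randomized adversary as well.

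The same fixing can be done jointly with the proof environment. All probabilities are conditioned on $v$ and averaged over $\gamma$, $Y_A$, the lane challenge and the branch. The coins therefore enter only through an outer expectation, and the losses $\eps_{\rm bad}$ are unaffected because they do not depend on the adversary's coins.

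The main obstacle I expect is the independence claim in the first step. The averaging requires that the adversary's coins not alter the distribution of targets, the core $\mathcal I$, or the event $\mathsf{GoodCtx}_{\mathcal I}$. If the adversary can influence source events, for example by injection, its coins could correlate with the branch posterior. I would handle this by noting that Assumption~\ref{ass:source-readiness-sep} denies credit to such source choices. Any attacker influence is then absorbed into the fixed view $v$ and the source posterior, so that conditioned on $v$ the remaining coins are independent of $Y_A$ and of the selection.
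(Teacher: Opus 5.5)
Your proposal is correct and matches the paper's proof: the paper also conditions on the adversary's complete private random tape, notes that the retained-state, depth, and query bounds hold for every fixing, and uses the fact that conditional success is the average over fixings to pick one tape attaining at least the average. Your added remarks go beyond what the paper spells out: taking $b_\Delta$ as a worst case over coins, and absorbing attacker-influenced source choices into $v$ so the coins are conditionally independent of the targets. These are reasonable refinements of points the paper leaves implicit, not a different route.
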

\begin{proof}
Condition on the adversary's complete private random tape $R$.  The retained-state, dependent-depth, and query bounds hold pointwise for every fixing of $R$, while conditional success is the average of the corresponding deterministic success probabilities.  Hence some fixing of $R$ attains at least the average success without increasing any resource bound.  The counting argument may therefore take the encoder and all online decoders to be deterministic.\end{proof}

For $K\le N$, define the worst-case aggregate online side-information budget over $K$-target subsets by
\begin{equation}
\bar b_\Delta(K):=\max_{A\subseteq\mathcal I,\ |A|=K} b_\Delta(A).
\label{eq:bbar}
\end{equation}

\begin{theorem}[Partial causal readiness from exact coverage]\label{thm:partial-causal}
Fix $V_{\rm pre}^{\rm free}=v$.  Under the standard-ROM causal experiment above, for every $\beta\in(0,p_c)$ let $K=K_\beta(w)$.  Then
\begin{equation}
\boxed{
M_\Delta
\ge
K a_{\rm causal}
-\bar b_\Delta(K)
-\log_2\binom{N}{K}
-\log_2\frac{W_c-\beta}{p_c-\beta}.}
\label{eq:partial-causal-bound}
\end{equation}
The supremum over $\beta\in(0,p_c)$ is valid, with negative values replaced by zero.  The theorem uses the causal width $a_{\rm causal}$ of Eq.~\eqref{eq:a-causal}.
\end{theorem}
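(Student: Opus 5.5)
The plan is to reduce partial success to the perfect-readiness counting of Lemma~\ref{lem:causal-certificate}, using an averaging argument over protected tables and a union over $K$-subsets of the core. By Lemma~\ref{lem:determinization} the encoder and decoders are taken deterministic. We condition on $v$, on $\mathsf{GoodCtx}_{\mathcal I}$ and on the proof environment $\Gamma=\gamma$. The protected table $Y_{\mathcal I}$ is uniform over $\mathcal Y_{\mathcal I}$ and independent of $\gamma$, and the selected target $i$ is drawn with weight $w_i$ independently of $Y_{\mathcal I}$ (Assumption~\ref{ass:source-readiness-sep}). Subtracting the bad events ($\eps_{\rm bad}$) and the out-of-core mass $1-W_c$ leaves success mass at least $p_c$ inside the core. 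Averaging over $\gamma$ fixes a $\gamma$ attaining at least $p_c$.

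For a table $y$, let $S(y)\subseteq\mathcal I$ be the set of core targets reconstructed successfully, and write $w(S)=\sum_{i\in S}w_i$. Then $\mathbb E_y[w(S(y))]\ge p_c$. Since $w(S(y))\le W_c$, a reverse Markov argument gives
\[
\Pr_y\bigl[w(S(y))\ge\beta\bigr]\ge\frac{p_c-\beta}{W_c-\beta}.
\]
Every $y$ in this good event has a subset $A\subseteq S(y)$ with $|A|=K=K_\beta(w)$ and $w(A)\ge\beta$. To get this, take the heaviest elements of $S(y)$ until mass $\beta$ is reached, which needs at most $K$ elements, and pad with others. Padding is harmless because $S(y)$ succeeds on all its members, and $K$ is a lower bound on the size of any set reaching mass $\beta$, so $|S(y)|\ge K$.

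Next we count. Fix $A$ with $|A|=K$, a retained value $m_0$, and witness positions $(s_i)_{i\in A}$. We apply the argument of Lemma~\ref{lem:causal-certificate} to the coordinates in $A$, treating the coordinates outside $A$ as part of the kept undeleted data. The tables $y$ with $\mathsf{Enc}(y)=m_0$ on which all targets of $A$ succeed then number at most
\[
2^{b_\Delta(A)}(1+Q_\Delta)^K2^{D_{\mathcal I}-K\ell},
\]
since the deletion–reconstruction injection only needs success on $A$ for that particular $y$, not for every table. Summing over the $(d_\Delta+1)^K$ witness vectors, the $2^{M_\Delta}$ retained values and the $\binom NK$ subsets, and bounding $b_\Delta(A)\le\bar b_\Delta(K)$, the good event has size at most
\[
2^{M_\Delta+\bar b_\Delta(K)}\binom NK\bigl((1+Q_\Delta)(d_\Delta+1)\bigr)^K2^{D_{\mathcal I}-K\ell}.
\]
Comparing with the lower bound $\frac{p_c-\beta}{W_c-\beta}\,2^{D_{\mathcal I}}$ and taking logarithms yields Eq.~\eqref{eq:partial-causal-bound}. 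If the bracket in $a_{\rm causal}$ is negative, clipping it to zero gives a weaker valid bound. Since $\beta$ was arbitrary in $(0,p_c)$, the supremum over $\beta$ is valid.

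The main obstacle is making the lemma's counting apply fiberwise, per table, rather than under its stated hypothesis that every table succeeds. This needs a check that the replay injection uses only success on the specific $y$. The cross-target transcript in $b_\Delta(A)$ is defined over $K$-subset reconstructions while the table now includes the coordinates in $\mathcal I\setminus A$. I would handle this by folding those outside coordinates into the fixed environment, i.e.\ conditioning on them, and checking that Eq.~\eqref{eq:online-sideinfo} still bounds the resulting transcript fiber. If it does not, the worst case over that extra conditioning should be absorbed into $\bar b_\Delta(K)$.
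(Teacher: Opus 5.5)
Your proposal is correct and follows essentially the paper's route: determinize, bound the probability that the core success mass reaches $\beta$ by $(p_c-\beta)/(W_c-\beta)$ through the same averaging step, pay $\binom NK$ for a canonical $K$-subset, fix the coordinates outside $A$, apply Lemma~\ref{lem:causal-certificate} per retained value and witness vector, and take logarithms. The obstacle you flag is resolved exactly as in the paper: $\gamma$ together with the fixed coordinates in $\mathcal I\setminus A$ is itself an admissible environment $\Gamma_A$, over which Eq.~\eqref{eq:online-sideinfo} already maximizes, and the certificate injection only uses success on the particular table.

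One small correction: a $K$-subset of $S(y)$ need not have mass at least $\beta$, since heaviest-first can need more than $K$ elements when $S(y)$ misses the heavy targets. This claim is never used, though, because the count only needs $|S(y)|\ge K$, which you derive correctly.
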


\begin{proof}
Apply Lemma~\ref{lem:determinization}.  It is convenient to prove the equivalent success upper bound first.  Work on the source-, synchronization-, and contextual-separation-good experiment for the finite core $\mathcal I$; Eq.~\eqref{eq:p-core} accounts for leaving this experiment and for selection outside the core.

Fix an arbitrary proof environment $\Gamma_{\mathcal I}=\gamma$.  Let $\mathcal Y_{\mathcal I}$ be the full protected table for all $N$ core targets, with $D_{\mathcal I}=N(d+1)\ell$, and for $y\in\mathcal Y_{\mathcal I}$ let $S_\gamma(y)$ be the total core selection mass of targets on which the deterministic adversary succeeds.  Define
\[
\mu_\beta(\gamma)
:=
\Pr_{Y_{\mathcal I}}[S_\gamma(Y_{\mathcal I})\ge\beta\mid\Gamma_{\mathcal I}=\gamma].
\]
Every table counted by $\mu_\beta(\gamma)$ has at least $K=K_\beta(w)$ successful core targets.  Choose a canonical $K$-subset $A$ of successful targets; there are at most $\binom NK$ choices.

Fix one such subset $A$ and one assignment of every protected coordinate outside $A$.  With that assignment fixed, the retained-state encoder becomes an arbitrary function of the remaining $K$-target table $Y_A$ (and of the already fixed proof environment); no separability is introduced by this conditioning.  Lemma~\ref{lem:causal-certificate}, followed by the sum over witness-position vectors, shows that one retained-state value can be successful on at most
\[
2^{D_A-Ka_{\rm causal}+b_\Delta(A)}
\]
$K$-target tables.  Summing over the $2^{D_{\mathcal I}-D_A}$ outside assignments, at most $2^{M_\Delta}$ retained-state values, and at most $\binom NK$ canonical subsets gives, for every fixed $\gamma$,
\begin{equation}
\mu_\beta(\gamma)
\le
B_\beta
:=
\min\left\{1,
2^{M_\Delta+\bar b_\Delta(K)+\log_2\binom NK-Ka_{\rm causal}}
\right\}.
\label{eq:mu-pointwise-gamma}
\end{equation}
The bound holds pointwise for every fixed $\gamma$.

Since $0\le S_\gamma\le W_c$,
\[
\mathbb E[S_\gamma(Y_{\mathcal I})\mid\gamma]
\le
\beta+(W_c-\beta)\mu_\beta(\gamma)
\le
\beta+(W_c-\beta)B_\beta.
\]
The right-hand side is independent of $\gamma$, so averaging over the proof environment preserves the same upper bound.  Restoring the bad-event and outside-core probabilities therefore yields
\begin{equation}
p
\le
\eps_{\rm bad}+(1-W_c)
+\beta+(W_c-\beta)B_\beta.
\label{eq:success-upper-proof}
\end{equation}
Equivalently, Eq.~\eqref{eq:p-core} gives
\[
p_c\le\beta+(W_c-\beta)B_\beta.
\]
For $0<\beta<p_c$, this implies
\[
B_\beta\ge\frac{p_c-\beta}{W_c-\beta}.
\]
Combining this with Eq.~\eqref{eq:mu-pointwise-gamma} and taking logarithms gives Eq.~\eqref{eq:partial-causal-bound}.\end{proof}

It is often more natural to quote the theorem in the forward security direction.  With
\begin{equation}
B_\beta
:=
\min\left\{1,
2^{M_\Delta+\bar b_\Delta(K)+\log_2\binom NK-Ka_{\rm causal}}
\right\},
\label{eq:good-table-upper}
\end{equation}
the proof gives directly
\begin{equation}
p
\le
\eps_{\rm bad}+(1-W_c)+\beta
+(W_c-\beta)B_\beta.
\label{eq:success-upper}
\end{equation}
Equations~\eqref{eq:good-table-upper}--\eqref{eq:success-upper} expose directly how retained-state length, finite target coverage, and online resources trade against success.

\subsection{Uniform-core and min-entropy corollaries}

For a uniform core of $B$ contextual branches and $J$ lanes, $N=JB$, $W_c=1$, and
\[
K_\beta=\lceil\beta N\rceil.
\]
For large $N$ and fixed $\beta$, the theorem with $\bar b_\Delta(\lceil\beta N\rceil)=0$ gives
\begin{equation}
\frac{M_\Delta}{N}
\gtrsim
\beta a_{\rm causal}-h_2(\beta)
-\frac{\bar b_\Delta(\lceil\beta N\rceil)}{N},
\label{eq:uniform-asymptotic}
\end{equation}
up to vanishing per-target terms and the good-table-mass penalty.  A positive linear-in-$N$ retained-state bound follows whenever a fixed $\beta>0$ in the admissible success range satisfies
\begin{equation}
\beta a_{\rm causal}>h_2(\beta).
\label{eq:uniform-positive-regime}
\end{equation}
To express the scaling as a constant fraction of the raw protected width $N\ell$, one additionally needs $a_{\rm causal}/\ell$ bounded below by a positive constant along the parameter family (or, equivalently for an $h\to\infty$ family, one may hold $\ell,Q_\Delta,d_\Delta$ fixed in a regime satisfying Eq.~\eqref{eq:uniform-positive-regime}).  Under that additional condition, $B=2^h$ and $w=J\ell$ give $M_\Delta=\Omega(w2^h)$ for finite-core families that continue to satisfy the source-separation and contextual-fold conditions.  More generally, a linear bound remains whenever the online side-information term grows strictly slower than the positive linear contribution $N[\beta a_{\rm causal}-h_2(\beta)]$.

For a nonuniform source, conditional min-entropy alone is not the preferred route to a numerical retained-state bound.  If a finite source core has $K_{\rm core}$ branches and point probabilities at most $2^{-h}$, then each branch--lane target has weight at most $2^{-h}/J$, so
\[
K_\beta(w)\ge\lceil\beta J2^h\rceil.
\]
The finite core size $N=JK_{\rm core}$ still enters the subset-selection term, while the causal width is set by $(\ell,Q_\Delta,d_\Delta)$ and the cross-target online contribution enters through $\bar b_\Delta(K)$.  Exact mature posterior coverage is therefore both tighter and conceptually primary.

\subsection{Interpretation}

The one-epoch theorem separates four effects.  Mature contextual coverage determines how many branch--lane targets must be supportable on a non-negligible fraction of protected tables.  The causal width measures the per-target contribution remaining after witness-position choice and online candidate exposure.  The term $\bar b_\Delta(K)$ charges table-dependent protected side information acquired from other targets after selection.  The finite-core combinatorial term accounts for success concentrating on different target subsets for different protected tables.

The bound counts information rather than layout: the retained state may be globally encoded across targets, while protected information newly acquired online is charged separately.  Appendix~\ref{app:physical} maps this abstract deadline-accessible state to physical memory tiers.

\section{Honest Cost and Streaming Feasibility}\label{sec:honest-cost}

The device-side cost does not grow with the number of plausible contextual histories.  The honest implementation has the following fixed surface:
\begin{center}
\begin{tabular}{p{0.24\linewidth}p{0.66\linewidth}}
\toprule
Operation & Honest-device action \\
\midrule
Persistent epoch state & one contextual accumulator, $J$ readiness lanes, bounded reconciliation state, and $O(1)$ confirmed states; \\
Accepted event & one contextual hash and $JL$ fixed-width readiness evaluations; \\
End of epoch & freeze the already-maintained state; no branch enumeration or catch-up computation; \\
Challenge & read one selected lane and compute one nonce-bound authenticator. \\
\bottomrule
\end{tabular}
\end{center}
Accordingly,
\begin{equation}
M_H
=J\ell+O(\lambda)+M_{\rm rec}
\label{eq:honest-memory}
\end{equation}
and the readiness work per accepted event is
\begin{equation}
W_H=JL.
\label{eq:honest-work-event}
\end{equation}
With $J=256$ and $\ell=256$, the readiness vector is 8~KiB; with $L=4$, one accepted event performs 1024 readiness evaluations.  Both $J$ and $L$ are tunable rather than protocol constants: for example, $J=64$, $\ell=256$, and $L=4$ use a 2~KiB readiness vector and 256 readiness evaluations per accepted event, with the corresponding security tradeoff evaluated by the theorem.  Platform suitability is therefore set by the chosen profile, the accepted-event rate, and measured hash throughput, not by the number of histories in the attacker's posterior.

A simple feasibility envelope illustrates the interaction between honest throughput and adversarial post-selection speed. Let the honest readiness-update rate be $f_H$, let events arrive at rate $\nu$, and let the adversary's maximum sequential rate be $f_A=s f_H$. Honest real-time processing requires
\[
JL\nu<f_H.
\]
To make the entire epoch deeper than the post-selection adversarial depth $f_A\Delta$, one needs
\[
TL>f_A\Delta.
\]
For some $L$ to satisfy both inequalities, a necessary envelope is
\begin{equation}
\boxed{
\tau_{\rm epoch}=\frac{T}{\nu}
>Js\Delta.}
\label{eq:streaming-envelope}
\end{equation}
This is an engineering relation, not a hardware benchmark. Streaming matters because the honest work is spread across $\tau_{\rm epoch}$ instead of being paid in a burst after the epoch.

\section{Cost, Calibration, and Illustrative Results}
Security calibration is separated from the fixed honest-device path of Section~\ref{sec:honest-cost}.  The attacker-side depth and candidate budgets must be measured against the readiness primitive rather than inferred from clock rate or generic hash throughput. Define
\begin{equation}
d_\Delta=\max\{d:T_{\rm chain}(d)\le\Delta\},
\qquad
Q_\Delta=\max\{q:T_{\rm batch}(q)\le\Delta\},
\label{eq:calibration}
\end{equation}
where $T_{\rm chain}$ measures a dependent readiness chain and $T_{\rm batch}$ measures independent target-bearing candidate evaluations under an attacker-optimized implementation. Prefix precomputation and device-specific batching are allowed.  A security profile may use attacker-favorable maxima for $d_\Delta$ and $Q_\Delta$, even when obtained on different platforms, or use a measured joint feasible region when tighter calibration is needed.  The cross-target term $\bar b_\Delta(K)$ is likewise an attacker-side analysis bound on the variability of successful protected helper transcripts; it is not a field or data structure maintained by the honest device.  A certified deployment therefore combines a simple honest profile $(J,L,T,\Delta)$ with attacker-favorable bounds on $(d_\Delta,Q_\Delta,\bar b_\Delta)$ and a source posterior model.

The theorem lower-bounds the log-cardinality of a \emph{deadline-accessible retained state}, not intrinsically RAM or Shannon information. If $C_{\rm RA}(\Delta)$ denotes the information capacity that can be randomly accessed quickly enough to affect a response within $\Delta$, then successful attack requires
\begin{equation}
C_{\rm RA}(\Delta)\ge M_\Delta.
\label{eq:physical-corollary}
\end{equation}
Calling the result a RAM lower bound additionally requires a systems argument that slower tiers cannot contribute within the deadline.

For illustration, with a uniform mature core of $2^{20}$ branches, $J=256$, $\ell=256$, $Q_\Delta=2^{40}$, $d_\Delta=32$, target success $p=0.9$, and $\bar b_\Delta(K)=0$, the bound gives approximately $M_\Delta\gtrsim5.92$ GiB.  An exact-coverage binary mature-source example with the same readiness parameters gives approximately $41.52$ GiB.  A nonzero online side-information term subtracts directly from these retained-state bounds.

Figure~\ref{fig:causal-h-main} shows the central scaling effect for uniform mature cores.  The horizontal axis is the number of uncertain mature branches on a logarithmic scale, while each line fixes the honest readiness width.  Honest memory remains fixed at the line's stated KiB value; the adversarial retained-state lower bound rises rapidly as more mature branches must be covered.  Line style and marker shape, rather than color, identify the three honest-memory profiles so the plot remains readable in grayscale.

\begin{figure}[H]
\centering
\includegraphics[width=0.70\linewidth]{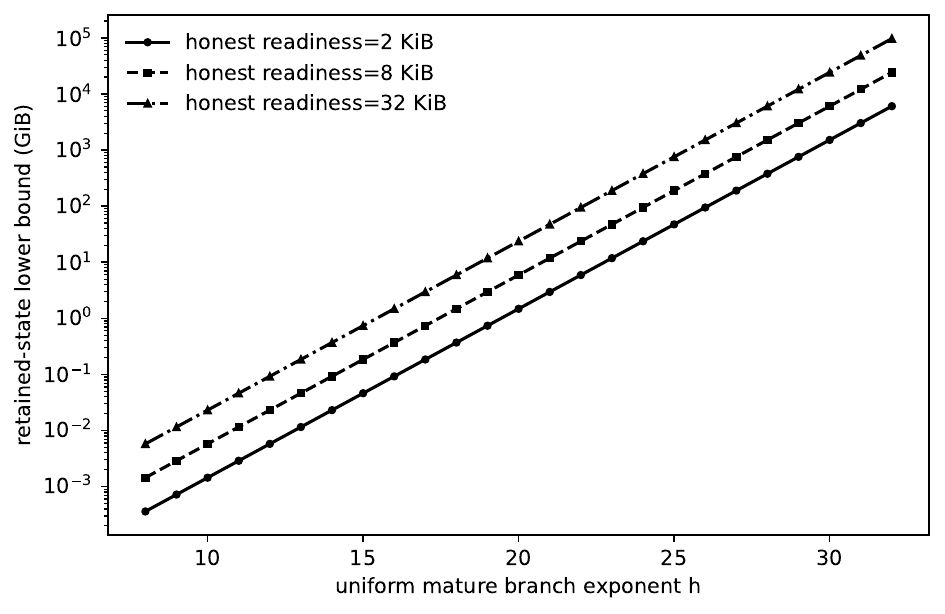}
\caption{Retained-state lower bound versus the number $2^h$ of uniform mature branches for three fixed honest readiness widths.  Solid-circle, dashed-square, and dash-dot-triangle curves correspond to 2, 8, and 32~KiB of honest readiness state.  Parameters are $p=0.9$, $\ell=256$, $Q_\Delta=2^{40}$, $d_\Delta=32$, and $\bar b_\Delta(K)=0$.}
\label{fig:causal-h-main}
\end{figure}

Full numerical details and reproducibility files are given in Appendix~\ref{app:numerics}.

\section{Related Work and Positioning}

\subsection{Post-quantum and resource-bounded cryptography}
Post-quantum cryptography addresses the possibility that quantum algorithms change the computational hardness assumptions underlying deployed public-key systems~\cite{Shor1994,NISTPQC2024}. Contextual Chain is complementary to post-quantum primitives: it adds deadline-accessible retained-state capacity as a separate resource dimension that can remain relevant even when computational throughput is high.

The closest conceptual precedents are resource-bounded cryptographic models. Maurer explicitly considered security based on restricted adversarial memory rather than restricted computation~\cite{Maurer1992}. The bounded-quantum-storage model develops analogous ideas when the scarce resource is coherent quantum storage~\cite{DamgardEtAl2008}, and bounded-storage authentication gives a modern authentication formulation~\cite{DodisQuachWichs2022}. Proofs of space make storage itself verifiable~\cite{DziembowskiEtAl2015}. Memory-hard functions instead design computations whose efficient evaluation requires substantial memory or sustained space~\cite{AlwenSerbinenko2015,AlwenBlockiPietrzak2018,ForlerLucksWenzel2014,BonehCorriganGibbsSchechter2016,RFC9106}. Contextual Chain differs in the source of the resource pressure: posterior uncertainty over a relationship-specific history creates many plausible readiness targets, and a fresh challenge selects one only after preparation. The theorem lower-bounds retained-state length needed across those targets rather than assuming a fixed storage cap or importing a password-hashing game.

A second, technically closer line of work studies preprocessing or non-uniform attacks. In the auxiliary-input random-oracle model, an attacker may retain an arbitrary bounded-length function of the random-oracle table and then make a bounded number of online oracle queries~\cite{CorettiDodisGuoSteinberger2018}. Subsequent work studies advice-size/online-query and offline/online-query tradeoffs for preprocessing attacks~\cite{GhoshalTessaro2023}. Static-memory-hard functions likewise distinguish preprocessed static storage from dynamic runtime memory~\cite{DryjaLiuPark2018}. These models are important precedents for the proof language used here: $M_\Delta$ is an advice/state-length quantity, not a sum of per-target mutual informations. The model likewise permits arbitrary bounded-length preprocessing of the protected readiness-oracle environment after conditioning on the source/$H_{\rm ctx}$ view, together with adaptive online oracle access.  Cross-target protected helper subtranscripts acquired online are represented explicitly by $b_\Delta$, so preprocessing advice and online side information are not conflated. The distinction is instead in how the protected task family is induced: after conditioning on public source leakage, the accepted-source posterior and maturation rule determine a relationship- and epoch-specific collection of readiness chains, and a fresh post-selection challenge chooses which member must be reconstructed under the deadline. The distinguishing contribution is the source-to-preparation transformation and its causal counting bound: accepted contextual uncertainty generates an epoch-specific family of readiness targets, and the challenge is chosen only after preparation.

\subsection{Contextual Chain design lineage}
Contextual Chain was first studied as an operational framework in which evolving context guides continuation, with an initial systems realization for post-partition recovery under intermittent connectivity~\cite{KimContextualChainSystems2026}.  Version~1 of the public specification records that recovery-oriented realization and treats resource-sensitive proof of context as a non-normative extension~\cite{KimCCSpecV1}.  Version~2 separates the general Contextual Chain design family from implementation-specific profiles: evolving context and context-informed continuation form the conceptual core, whereas synchronization, storage policy, and cryptographic authentication are realization-specific modules~\cite{KimCCSpecV2}.  It also identifies contextual resource asymmetry as a research direction without fixing an attacker model or claiming a lower bound.  The continuity-authentication construction studied here instantiates this design space with streaming readiness and a post-selection challenge, while the retained-state theorem formalizes the resulting preparation requirement.

\subsection{Reconciliation and noisy context}
Secure sketches and fuzzy extractors provide established tools for reconciling noisy, nonuniform observations while accounting for public helper information~\cite{DodisReyzinSmith2004}. Contextual Chain does not require the contextual source itself to be extracted into a uniform key. Instead, the security theorem takes the posterior over accepted mature histories, after all zero-cost helper information and selection leakage have been conditioned on, as its source interface. Reconciliation is therefore a replaceable source-layer primitive.

\subsection{Context-based authentication and co-presence}
Sensor context has previously been used to establish co-presence, resist relay attacks, or derive authentication material for IoT and zero-interaction authentication~\cite{TruongEtAl2015,MiettinenEtAl2018}.  Those systems demonstrate both the practical value of ambient context and the importance of modeling context manipulation and helper leakage.  Contextual Chain uses context differently: accepted shared experience is not required to become a uniform secret key or a one-shot co-presence classifier.  Instead, its residual posterior uncertainty determines how many mature readiness trajectories an uncertain observer must prepare for before a fresh post-selection challenge.  This distinction is why the source interface in Section~4 is posterior coverage rather than raw sensor entropy.

\subsection{Simple hash chains}
A deterministic chain such as
\[
U_{t+1}=\operatorname{SHA256}(\texttt{CTX}\concat U_t\concat E_{t+1})
\]
can bind a compact state to an event history, but it does not create entropy and by itself gives no retained-state lower bound. The security mechanism studied here additionally requires mature contextual uncertainty, streaming readiness, and a fresh post-selection challenge.

\section{Discussion}
Contextual Chain is intended for continuity after association, not for first-contact identity.  Its natural operating regime is therefore an intermittently connected relationship in which the peers can repeatedly obtain some shared physical or operational evidence but cannot assume a continuously available link.  Link loss does not create honest-side candidate histories: the state advances only on canonical accepted events, and failed reconciliation invokes a bounded pause, abort, rollback, or resynchronization policy.  The cryptographic source is consequently the posterior over \emph{accepted} mature histories after helper data, retries, timing, and allowed active manipulation have been conditioned on.  Appendix~\ref{app:source-details} makes this source interface explicit and explains why selective jamming and repeated reconciliation must be reflected in that posterior rather than treated as free entropy.

The security statement is quantitative rather than absolute.  A deployment chooses a response deadline $\Delta$ and then gives the attacker favorable bounds on post-selection sequential depth and candidate-query capacity.  The theorem converts these bounds and mature posterior coverage into a minimum deadline-accessible retained-state requirement.  The mapping from that abstract state length to RAM, cache, VRAM, or other storage depends on which tiers can actually affect the response before $\Delta$; Appendix~\ref{app:physical} gives this systems interpretation.  This distinction is important for the intended ``memory explosion'' view of security: the theorem does not assume that a named memory technology is scarce, but instead states how much state must already be reachable quickly enough to matter.

Cross-target online side information is separated from retained state rather than silently treated as free.  The adversary may query protected prefixes associated with other plausible histories, and the term $\bar b_\Delta(K)$ measures the residual protected-table information such accesses can reveal after the retained state is fixed.  A purely memory-only numerical claim therefore requires an upper bound on this term; $\bar b_\Delta(K)=0$ means that the foreign protected subtranscript is fixed after the retained state and proof environment are fixed; it does not prohibit foreign queries.  Appendix~\ref{app:robustness} explains why XOR/parity and re-queryable helper examples make this accounting necessary, while Appendix~\ref{app:calibration} describes how $d_\Delta$, $Q_\Delta$, and the online side-information term enter a deployment profile.  The theorem itself remains strategy-independent within those resource bounds.

The honest side stays deliberately simpler than the proof.  It maintains one context value and $J$ fixed-width readiness lanes, performs $JL$ readiness evaluations per accepted event, and reads only one lane at challenge time.  The parameters used in the numerical examples are not hardware benchmarks; on a constrained microcontroller the dominant implementation question may be energy or per-event hash throughput rather than state width.  This is why the sequential and parallel attack rates are calibration inputs rather than generic hash-speed constants.  Appendix~\ref{app:calibration} gives the corresponding measurement rule, and Appendix~\ref{app:numerics} shows how the resulting values propagate into the retained-state bound.  A small device benchmark is therefore the most useful next systems experiment, but it is not needed for the abstract security theorem.

Frequent disconnection also raises a longer-timescale question that one epoch cannot settle.  Missing a deadline does not prevent an attacker from continuing to reconstruct old context later.  The basic protocol handles ordinary interruption with bounded transactional state, whereas a stronger persistent profile can carry the wide readiness vector across epochs so that unresolved preparation debt is not compressed into a short state once the attacker eventually catches up.  Appendix~\ref{app:persistent} formulates this maturation-versus-catch-up extension.  The same separation clarifies temporary snapshot compromise: once a leaked state is treated as free information, later context can create a new preparation asymmetry if fresh residual uncertainty again matures.  Appendix~\ref{app:healing} gives this contextual-healing formulation and also states the unavoidable boundary that a perfect clone receiving the same future state and context cannot be distinguished by continuity alone.

The broader point is that Contextual Chain does not replace conventional authentication or post-quantum cryptography.  It adds a different axis of asymmetry.  Computational hardness asks how much computation is needed after the challenge; Contextual Chain asks how much useful state must already have been prepared before a challenge reveals what matters.  Memory-based and preprocessing-based security are established ideas, but here the preparation target is generated continuously by relationship-specific shared experience.  That source-to-preparation transformation is what allows a very small streaming endpoint to induce a potentially much larger adversarial readiness problem without making the endpoint itself enumerate the uncertainty.

\section{Conclusion}
Contextual Chain is built around a deliberately simple asymmetry.  The legitimate device follows one realized shared history, keeps one compact context state and a fixed readiness vector, and answers from one randomly selected lane.  It never enumerates alternatives.  An attacker that has lost contextual continuity may instead need to remain prepared across many mature histories before learning which lane and history must be answered under the deadline.

The causal counting theorem turns that asymmetry into a quantitative security statement.  It allows arbitrary nonlinear preselection encoding and adaptive cross-target oracle use, then lower-bounds the deadline-accessible state required for a chosen success probability after accounting for post-selection depth, candidate queries, and protected information newly obtained online.  Exact mature posterior coverage connects the source layer to that bound, while a standard nonce-bound authenticator composes the reconstructed lane state into an end-to-end response.

The resulting separation is practical: protocol execution remains a fixed-memory stream of hash-based updates, whereas security is tuned through source coverage and measured attacker resources.  This makes the construction suitable for evaluation on lightweight and IoT-class devices without making the device implementation inherit the combinatorial complexity faced by an uncertain attacker.

\appendix
\section{Proof Robustness and Deferred-Decision Details}\label{app:robustness}
The retained-state theorem is a global compression statement rather than a sum of targetwise information bounds.  This appendix records the proof choices that matter most when arbitrary preprocessing and adaptive cross-target queries are allowed.

\subsection{Why global counting is necessary}
A single resource-hardness parameter can hide distinct effects: source identification, post-selection sequential depth, candidate search, and information retained across the selection cut.  The main theorem therefore exposes mature posterior coverage, $d_\Delta$, $Q_\Delta$, and retained-state length separately.  Likewise, a multi-target lower bound cannot in general be obtained by adding marginal mutual informations.  Global encodings such as XOR/parity or nonlinear shared-root codes can carry little marginal information about each target while becoming useful after other values are revealed.  The proof instead counts fibers of an arbitrary retained-state encoder over the complete protected table.

Re-queryable oracle side information gives the same warning in a simpler form.  Let $Y$ be a protected $\ell$-bit value and let $Z=H_{\rm rd}(\texttt{PAD})$ be an independent, re-queryable non-protected value.  Retaining $M=Y\oplus Z$ gives $I(Y;M)=0$, yet re-querying $Z$ after selection recovers $Y$.  The retained state still contains $\ell$ bits, so this is not an attack on the theorem; it shows why a proof based only on mutual information would use the wrong quantity.

\subsection{Causal cut and cross-target protected side information}
Protected readiness addresses contain protected predecessor values, so a static list of addresses is not a suitable proof abstraction.  The readiness-lead invariant $\Lambda=s-r$ gives the causal boundary: cold selected-chain computation begins at lead $-1$, ordinary readiness transitions preserve lead, and deadline success requires a value with nonnegative lead.  Hence every successful target has a first timely witness that cannot first have been obtained by ordinary forward computation on the selected chain.

Cross-target helpers must still be accounted for.  Let $Y_1,Y_2\leftarrow\bits^\ell$ and retain $M=Y_1\oplus Y_2$.  If an execution for target 1 reaches a protected hidden point that returns $Y_2$, then $Y_1$ follows immediately.  Within one retained-state fiber, the helper response can vary over $2^\ell$ values, and its hidden-point address may vary as well.  Equation~\eqref{eq:online-sideinfo} therefore records the complete residual table-dependent foreign protected subtranscript.  In Contextual Chain, reaching such a point also requires constructing its branch-specific contextual prefix and protected predecessor: preparation carried across the cut is charged to $M_\Delta$, while genuinely new protected information obtained online is charged to $b_\Delta$.

\subsection{Deferred-decision representation of protected slices}\label{app:punctured}
The coupling in Eq.~\eqref{eq:hidden-point-coupling} is a standard-ROM deferred-decision representation.  On $\mathsf{GoodCtx}_A$, every protected transition has a distinct prefix $\rho_{i,s}$, so the corresponding restrictions of $H_{\rm rd}$ are independent random functions.  For each slice, the predecessor $P_{i,s}$ is fixed before that slice is sampled.  Sampling a full base function $G_{i,s}$ and then overwriting its value at $P_{i,s}$ with an independent uniform $Y_{i,s}$ therefore leaves the slice exactly random.

This representation has two proof advantages.  First, the finite vector $Y_A$ is independent uniform before the retained-state encoder is applied even though concrete protected addresses contain predecessor values.  Second, after the base functions are fixed, wrong candidate queries have deterministic answers, while a correct candidate hits a hidden point and exposes the corresponding protected chain value.  Conditioning on those base functions is proof-side bookkeeping, not operational free access.  Counterfactual executions retain their own depth and query budgets, and any table-dependent protected outputs obtained from unselected targets remain in the side-information transcript of Eq.~\eqref{eq:online-sideinfo}.

\subsection{Global causal-counting certificate}
The counting argument does not assume that protected outputs remain independent after the retained state is fixed.  Fix a target subset $A$, a retained-state value, one witness position per target, and one aggregate cross-target protected subtranscript.  Delete the $K$ selected-chain witness values and keep every other protected coordinate.  During deterministic replay, foreign correct hidden-point interactions are reproduced from the fixed subtranscript, wrong-point responses come from the fixed base functions, and undeleted selected-chain values come from the retained table.  Lemma~\ref{lem:first-timely-witness} ensures that replay reaches the certificate-specified witness exposure before the deleted witness is required.  The candidate produced at that exposure reconstructs the missing value.  Repeating this for all targets gives an injective description of length $D_A-K\ell$ protected bits plus witness positions, candidate ordinals, and the cross-target subtranscript, which yields the exponent $D_A-Ka_{\rm causal}+b_\Delta(A)$ in the main theorem.

\section{Source and Deployment Extensions}\label{app:deployment}
\subsection{Source, reconciliation, and active selection}\label{app:source-details}
A noisy deployment should distinguish raw observation attempts from accepted events. Write
\[
(E_n^A,E_n^B,A_n,\mathcal H_n)\leftarrow\mathsf{Rec}(W_n^A,W_n^B;a_n),
\]
where $A_n$ records acceptance and $a_n$ may represent active adversarial influence. If $N_t$ is the stopping time of the $t$-th accepted event, then the security source is $E_t=E_{N_t}^A=E_{N_t}^B$ on the synchronization-good event, not the raw $W_n$. Acceptance itself may leak substantial information, and selective jamming can become a source-selection attack. Accordingly the preselection view contains all helper transcripts, accept/reject history, timings, retries, aborts, and attacker observations.

Repeated reconciliation of the same physical sample is especially dangerous because helper leakage can accumulate; the default policy should use fresh observations for retries, or else evaluate all retry transcripts jointly. Event-level synchronization error should be promoted to an epoch-level bad-event probability. Exact distribution-free certification of a high-dimensional physical source from finite data is impossible without assumptions; a deployment should therefore use an explicit source family or confidence set and report mature coverage under the allowed attacker class.

Post-selection challenge/response tags are treated differently from zero-cost source leakage. They are public, but testing a candidate against them requires reconstructing the candidate readiness state. For delayed catch-up they are cost-bearing validators, not information to condition away for free.

\subsection{Physical resource interpretation}\label{app:physical}
Partition attacker storage into tiers $r$ with capacity $C_r$ and minimum random-access latency $L_r$. Define
\[
\mathcal R_\Delta=\{r:L_r\le\Delta\},\qquad C_{\rm RA}(\Delta)=\sum_{r\in\mathcal R_\Delta} C_r.
\]
Any successful physical realization of an abstract attack must provide at least the theorem's deadline-accessible retained-state length through tiers that can affect the response by the deadline. Thus $C_{\rm RA}(\Delta)\ge M_\Delta$ is a necessary systems-side condition. A slower tier contributes only through information prefetched across the selection cut into a faster reachable tier.

This does not imply a bandwidth lower bound of $M_\Delta/\Delta$: with a single random-lane challenge, an attacker may store a large indexed database yet read only one small record after selection. Multiple independently selected lane challenges could create a separate IOPS/bandwidth requirement, but the one-lane theorem does not rely on such a requirement. A full systems model is more naturally a feasible region in $(M,d,Q)$, with attacker-favorable bounds used for each coordinate.

\subsection{Persistent catch-up and wide-state continuity}\label{app:persistent}
One-epoch failure is not permanent loss. Let $\pi_{k:n}$ denote unresolved history conditioned only on zero-cost future source leakage, and let $\mathcal G_{k:n}$ denote the reconstruction graph containing context updates, readiness updates, historical validators, and state transitions. Define a dynamic reconstruction coverage $\operatorname{DRC}_W(\pi_{k:n},\mathcal G_{k:n})$ as the maximum posterior mass on which an adaptive strategy can catch up with cumulative work at most $W$, and
\[
W_p^\star(k,n)=\inf\{W:\operatorname{DRC}_W(\pi_{k:n},\mathcal G_{k:n})\ge p\}.
\]
Comparing $W_p^\star$ with the adversary's cumulative reconstruction budget expresses the maturation-versus-catch-up race without replacing posterior mass by a raw candidate count.

The short-state epoch reset in the basic construction can erase wide readiness debt once the correct branch has eventually been reconstructed. A persistent profile can instead carry readiness lanes across epoch boundaries, for example
\[
X_{k+1,0,j}=H_{\rm rd}(\texttt{EPOCH},k+1,\rho_{k+1},S_{k+1},j,X_{k,T,j}).
\]
The honest asymptotic state remains $O(J\ell)$, while an adversary that catches up must reconstruct the wide vector rather than merely a short digest.  This profile complements the one-epoch theorem when long-term continuity across repeated outages is required.

\subsection{Snapshot compromise and contextual healing}\label{app:healing}
Suppose an adversary obtains a complete snapshot $\Sigma_{t_{\rm rel}}$ and then loses direct state access. No security is claimed at the compromise instant. Subsequent accepted context is evaluated through
\[
\pi_v^{\rm heal}(e)=\Prb[\mathbf E_\tau^{\rm heal}=e\mid \Sigma_{t_{\rm rel}},V_{\rm pre}^{\rm heal}=v].
\]
The compromised state is therefore free information; all security credit comes from later residual contextual asymmetry. Once that uncertainty matures, the ordinary one-epoch causal theorem applies without modification. If $R_{\rm CC}^{\rm heal}(p,t)$ denotes the resulting retained-state requirement and $C_A(\Delta,t)$ the attacker's deadline-reachable capacity, a natural healing time is
\[
T_{\rm heal}(p)=\inf\{t>t_{\rm rel}:R_{\rm CC}^{\rm heal}(p,t)>C_A(\Delta,t)\}-t_{\rm rel}.
\]
Healing is an epoch-specific re-emergence of preparation asymmetry; permanent exclusion still depends on persistent catch-up. If a clone keeps the complete state and continues receiving exactly the same future contextual inputs, the executions remain identical and there is nothing for a state-and-context-only mechanism to distinguish.

\section{Calibration of \texorpdfstring{$d_\Delta$, $Q_\Delta$, and $b_\Delta$}{dDelta, QDelta, and bDelta}}\label{app:calibration}
For a concrete readiness primitive $F(c,x)$, $d_\Delta$ is measured with a genuinely dependent chain and $Q_\Delta$ with independent target-bearing predecessor-candidate evaluations, plus the single final output attempt. Generic library hash throughput is not an adequate substitute. Attack-side prefix precomputation, fixed-format kernels, SIMD/GPU batching, and hardware-specific layouts are permitted. Security evaluation should use attacker-favorable measurements or explicit acceleration margins. Network latency is not normally credited to security; the attacker is credited with the full verifier deadline $\Delta$.

The two hardware-rate parameters influence the theorem differently. Larger $Q_\Delta$ directly reduces the causal width, while larger $d_\Delta$ both reduces the causal width and moves the maturity frontier toward the past:
\[
a_{\rm causal}=[\ell-\log_2(1+Q_\Delta)-\log_2(d_\Delta+1)]_+,\qquad
\tau_\Delta=\max\left\{0,T-\left\lfloor d_\Delta/L\right\rfloor\right\}.
\]
Thus attacker sequential speed erodes both per-target protected width and the amount of contextual history eligible for security credit.

The quantity $b_\Delta(A)$ is different: it is not a raw hash-throughput benchmark but a proof-side log-cardinality of residual table dependence in the complete foreign protected query/response subtranscript after the retained state is fixed.  A memory-only numerical claim therefore needs an independent upper bound on $\bar b_\Delta(K)$.  The baseline $\bar b_\Delta(K)=0$ means that the foreign protected subtranscript has no residual table dependence once the retained state and proof environment are fixed; it does not forbid foreign queries.  More generally, the theorem can be reported as a joint retained-state/online-side-information tradeoff, with any conservative upper bound on $\bar b_\Delta(K)$ subtracted explicitly.  The numerical examples below label the zero-side-information baseline accordingly.

The accompanying code includes theoretical parameter sweeps; empirical hardware values must replace illustrative $d_\Delta$ and $Q_\Delta$ before platform-specific claims are made.

\section{Numerical Details and Reproducibility}\label{app:numerics}
All values and plots in this appendix are regenerated by the code included with the bundle.

\subsection{Exact binary coverage}

For a binary IID event source, let the more likely symbol have probability
\begin{equation}
q=2^{-\eta},
\qquad q\in[1/2,1].
\end{equation}
A length-$T$ sequence containing $k$ occurrences of the less likely symbol has probability
\[
q^{T-k}(1-q)^k
\]
and multiplicity $\binom{T}{k}$. Exact top-mass coverage can therefore be computed by accumulating probability classes in descending order, without enumerating $2^T$ sequences.

For $p=0.9$:
\begin{center}
\begin{tabular}{rrrr}
\toprule
$\eta$ & $T$ & $\log_2(p2^{\eta T})$ & exact $\log_2K_{0.9}$\\
\midrule
0.10 & 40  & 3.848 & 18.329 \\
0.10 & 100 & 9.848 & 43.365 \\
0.25 & 40  & 9.848 & 28.377 \\
0.50 & 40  & 19.848 & 36.266 \\
1.00 & 40  & 39.848 & 39.848 \\
\bottomrule
\end{tabular}
\end{center}
For $\eta=0.25$, $T=40$,
\begin{equation}
K_{0.9}=348{,}697{,}617,
\end{equation}
and the exact mass of the top $2^{20}$ sequences is approximately
\begin{equation}
0.3841131573.
\end{equation}
The difference between exact coverage and the min-entropy lower bound illustrates why posterior coverage remains the preferred operational quantity.  Figure~\ref{fig:binary-coverage} shows how the number of histories needed to cover 90\% probability grows with the event count and source rate.  The curves are distinguished by line style and marker shape so the ordering remains visible without color.

\begin{figure}[H]
\centering
\includegraphics[width=0.78\linewidth]{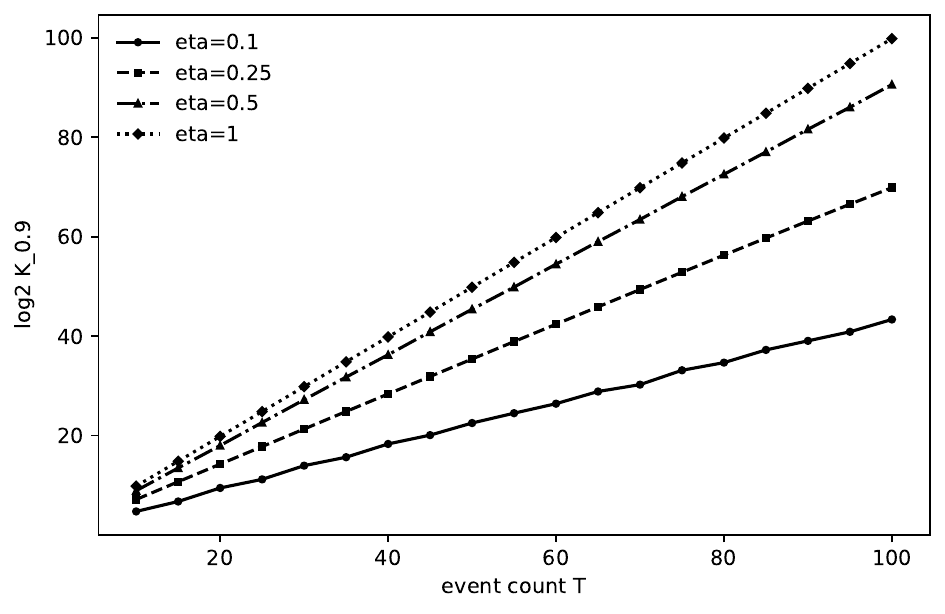}
\caption{Exact $\log_2K_{0.9}$ for controlled IID-binary sources.  Solid-circle, dashed-square, dash-dot-triangle, and dotted-diamond curves correspond to $\eta=0.10,0.25,0.50,1.00$.  For skewed sources, exact high-probability coverage can be much larger than the min-entropy count.}
\label{fig:binary-coverage}
\end{figure}

\subsection{Theorem-backed uniform-core causal example}

Consider a uniform mature core with $2^{20}$ contextual branches and $J=256$ lanes. Let
\[
\ell=256,\qquad
w=J\ell=8\ \mathrm{KiB},\qquad
Q_\Delta=2^{40},\qquad
d_\Delta=32,
\qquad p=0.9.
\]
For the baseline $\bar b_\Delta(K)=0$ case, the causal width is
\begin{equation}
a_{\rm causal}
=256-\log_2(1+2^{40})-\log_2 33
\approx210.956\ \text{bits}.
\end{equation}
The finite target core has
\[
N=J2^{20}=2^{28}
\]
branch--lane targets. Choosing the admissible value
\[
\beta=0.899955,\qquad
K_\beta=241{,}579{,}831,
\]
gives the same causal width $a_{\rm causal}\approx210.956$ bits for every covered target.  With $\bar b_\Delta(K)=0$, the resulting certified lower bound is
\begin{equation}
\boxed{
M_\Delta\gtrsim5.92\ \mathrm{GiB}.}
\label{eq:uniform-causal-example}
\end{equation}
The normalized bound is approximately $0.7398$ of $N\ell=w2^{20}$. This is the zero-side-information baseline; a nonzero upper bound on online protected side information subtracts directly from the retained-state lower bound.

Figure~\ref{fig:causal-storage-q} isolates the online-query tradeoff.  Increasing $Q_\Delta$ enlarges the set of candidate witness values that can be tried after selection and therefore reduces the state that must have crossed the selection cut.  The monotone decline is the intended calibration behavior: faster or more massively parallel online hardware must be compensated by a wider readiness state, a shorter deadline, or more mature contextual coverage.

\begin{figure}[H]
\centering
\includegraphics[width=0.78\linewidth]{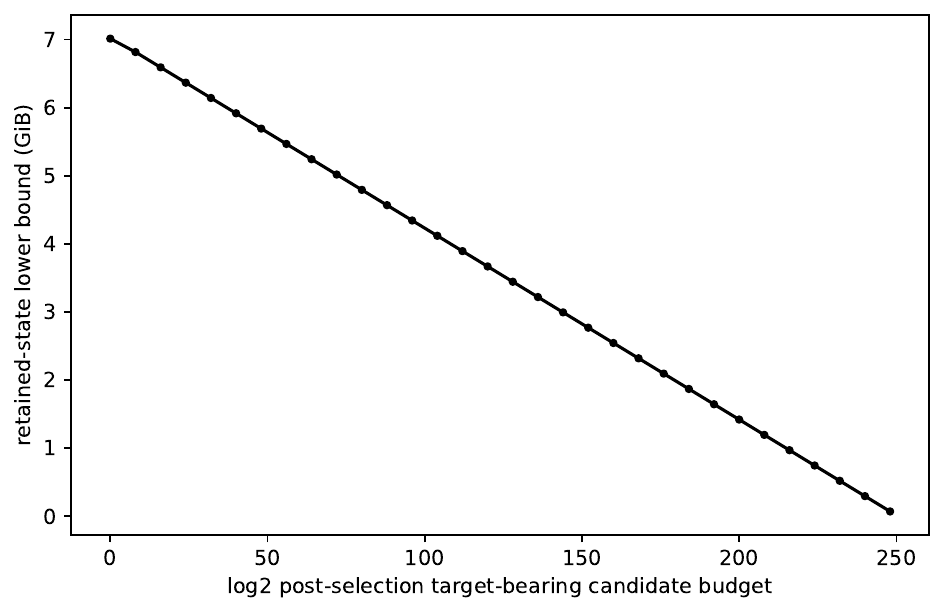}
\caption{Erosion of the zero-side-information retained-state bound as the selected-target candidate budget increases, for a uniform $2^{20}$-branch core, $J=256$, $\ell=256$, $d_\Delta=32$, and $p=0.9$.}
\label{fig:causal-storage-q}
\end{figure}

\subsection{Exact-coverage binary mature-source example}

The controlled IID-binary source can be combined consistently with maturation. Take an epoch of $T=40$ events, per-event readiness depth $L=4$, and post-selection sequential budget $d_\Delta=32$. Then
\[
\tau_\Delta
=40-\left\lfloor\frac{32}{4}\right\rfloor
=32.
\]
For $\eta=0.25$ bit/event, the exact 99\% mature-history core at length 32 contains
\begin{equation}
K_{0.99}=99{,}679{,}797
\end{equation}
contextual histories (the included mass is slightly above 0.99 because the last probability class is discrete). With $J=256$, the finite target core contains
\begin{equation}
N=25{,}518{,}028{,}032
\end{equation}
branch--lane targets. Using $\ell=256$, $Q_\Delta=2^{40}$, $d_\Delta=32$, and target online success $p=0.9$, the admissible choice $\beta=0.889956$ gives $K_\beta=1{,}733{,}947{,}800$ and $a_{\rm causal}\approx210.956$ bits. Exact target coverage in Eq.~\eqref{eq:partial-causal-bound} then gives
\begin{equation}
\boxed{
M_\Delta\gtrsim41.52\ \mathrm{GiB}.}
\label{eq:binary-causal-example}
\end{equation}
The large difference from the uniform $2^{20}$-branch illustration comes from exact mature coverage: for this skewed source, high-probability coverage can be far larger than the min-entropy count.  Figure~\ref{fig:binary-causal-tau} shows the second design lever: as more of the history becomes mature, exact coverage expands and the required retained state rises sharply.

\begin{figure}[H]
\centering
\includegraphics[width=0.78\linewidth]{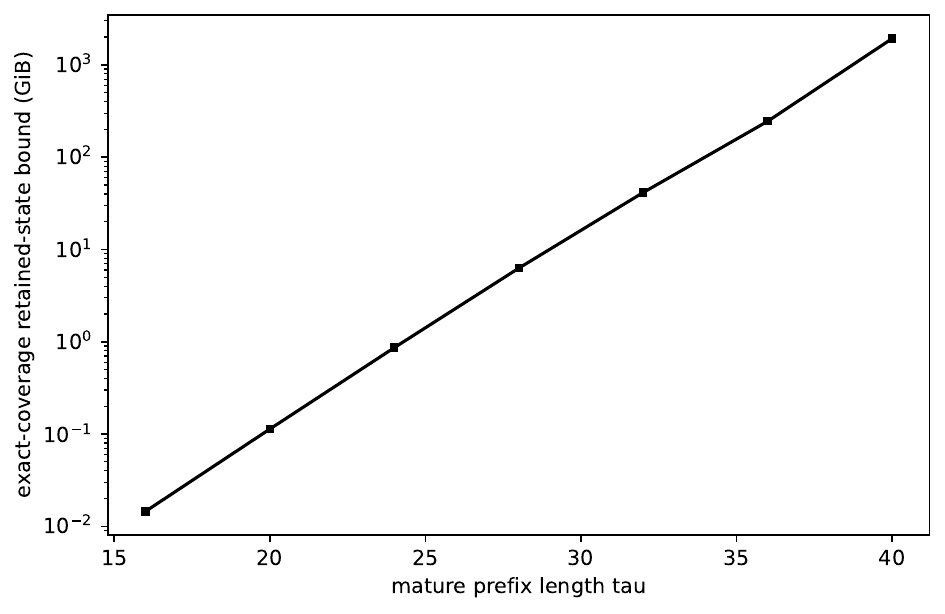}
\caption{Exact-coverage causal retained-state bound for the controlled binary source as the mature prefix length changes. Parameters are $\eta=0.25$, 99\% mature-history core, $J=256$, $\ell=256$, $Q_\Delta=2^{40}$, $d_\Delta=32$, and $p=0.9$.}
\label{fig:binary-causal-tau}
\end{figure}

\subsection{Maturation example}

For a pointwise source rate $\eta=0.25$ bit/event, readiness depth $L=4$, and varying post-selection sequential budget $d_\Delta$, Eq.~\eqref{eq:mature-rate} gives the mature min-entropy lower bound shown in Fig.~\ref{fig:maturity}. Recent events inside the regeneration horizon do not contribute to this lower bound.

\begin{figure}[H]
\centering
\includegraphics[width=0.78\linewidth]{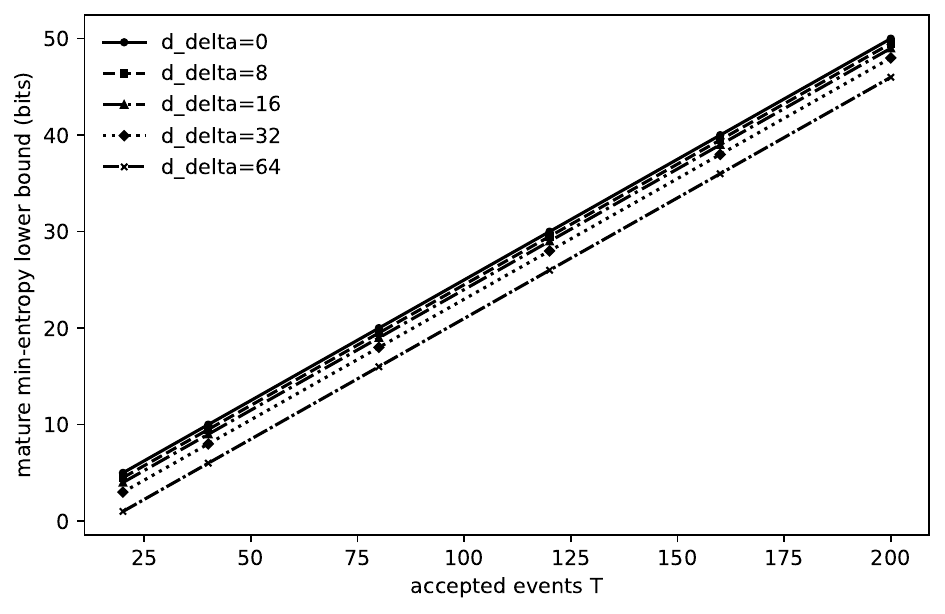}
\caption{Illustrative mature min-entropy lower bound $\eta\tau_\Delta$ with $\eta=0.25$ and $L=4$.  Distinct black line styles and markers denote $d_\Delta=0,8,16,32,64$; larger post-selection depth shifts the credited mature history downward.}
\label{fig:maturity}
\end{figure}

\subsection{Core-based state-width example}

In the same binary source with $\eta=0.25$, $T=40$, the 90\% source core has
\[
K_{0.9}=348{,}697{,}617.
\]
Using the full $T=40$ contextual updates required by $\mathsf{GoodCtx}$ and requiring the random-oracle fold-collision union bound to be below $2^{-128}$ gives the requirement
\begin{equation}
\lambda\ge190\ \text{bits}.
\end{equation}
This example illustrates why core coverage, rather than min-entropy alone, should control fold-state sizing.

\subsection{Streaming feasibility envelope}

Equation~\eqref{eq:streaming-envelope} gives, for example, $\tau_{\rm epoch}>25.6$ s when $J=256$, the adversarial sequential speedup factor is $s=100$, and $\Delta=1$ ms. This number is an algebraic illustration only; no hardware rate is assumed or measured.

The accompanying reproducibility bundle contains the exact coverage calculations, maturity and fold-collision sizing, parameter sweeps, all manuscript figures, and a validation script for the headline numerical values.  These calculations are theoretical illustrations unless an entry is explicitly labeled as an empirical benchmark.

\end{document}